\documentclass[year=26]{fmcad}

\usepackage{graphicx}
\usepackage{hyperref}
\hypersetup{colorlinks = true,linkcolor = black}

\usepackage{color}

\usepackage{cite}
\usepackage{tcolorbox}
\usepackage{scalerel,stackengine}
\usepackage{multicol}

\usepackage{amsmath}

\usepackage{amsthm}

\newcommand{\lp}{\left(}
\newcommand{\rp}{\right)}

\newcommand{\until}[1]{\,U_{#1}\,}
\newcommand{\release}[1]{\,R_{#1}\,}
\newcommand{\eventually}[1]{F_{#1}\,}
\newcommand{\always}[1]{G_{#1}\,}

\newcommand{\rob}[1]{[\![ #1 ]\!]}
\newcommand{\robapprox}[1]{\{\!\!\{ #1 \}\!\!\}\!}
\newcommand{\maxminsem}{max/min}
\newcommand{\linearsem}{semi-linear}

\usepackage{hhline} 
\usepackage{multirow}
\usepackage[table]{xcolor}

\usepackage{subcaption} 

\usepackage[algoruled,linesnumberedhidden,commentsnumbered,vlined]{algorithm2e}
\DontPrintSemicolon
\SetKwProg{Proc}{Procedure}{}{}
\SetKw{KwTo}{:}
\usepackage{algorithmic}

\usepackage[shortlabels]{enumitem} 

\newcommand{\bounds}{BOUNDS}

\newcommand{\variantI}{V1}
\newcommand{\variantIV}{V1-UNB}
\newcommand{\variantVI}{V1-LIN}
\newcommand{\variantVII}{V1-LIN-UNB}
\newcommand{\variantUNB}{\variantIV}
\newcommand{\variantLIN}{\variantVI}
\newcommand{\variantLINUNB}{\variantVII}

\newcommand{\signal}{s}
\newcommand{\sigvals}{\stVec}
\newcommand{\sigtimes}{\timeVec}
\newcommand{\siglen}{n}
\newcommand{\formulalen}{m}

\usepackage{amsfonts}
\newcommand{\reals}{\Re}

\newcommand{\nnreals}{\reals_{\geq 0}}

\newcommand{\nats}{\Ne}

\newcommand{\defeq}{:=}

\newcommand{\timeVec}{\mathbf{t}}

\newcommand{\stPt}{x}
\newcommand{\stVec}{\mathbf{\stPt}}
\newcommand \formula \phi 

\newcommand{\Ne}{\mathbb{N}}

\renewcommand{\Re}{\mathbb{R}}
\theoremstyle{plain} 
\newtheorem{theorem}{Theorem}[section]

\theoremstyle{definition}
\newtheorem{definition}{Definition}[section]

\usepackage{dblfloatfix}

\usepackage{url}

\begin{document}
\title{Privacy-Preserving Robust Monitoring of Dense-Time Temporal Logic}

\author{\IEEEauthorblockN{Charles Koll\orcid{0000-0001-5941-250X}}
\IEEEauthorblockA{\textit{Oregon State University}\\
Corvallis, OR, USA \\
kollch@oregonstate.edu}
\and
\IEEEauthorblockN{Mike Rosulek\orcid{0009-0003-5206-6191}}
\IEEEauthorblockA{\textit{Oregon State University}\\
Corvallis, OR, USA \\
rosulekm@oregonstate.edu}
\and
\IEEEauthorblockN{Houssam Abbas\orcid{0000-0002-8096-2618}}
\IEEEauthorblockA{\textit{Oregon State University}\\
Corvallis, OR, USA \\
houssam.abbas@oregonstate.edu}
}

\maketitle

\begin{abstract}
Motivated by the use of third-party testing, this paper presents a privacy-preserving implementation of the robustness monitoring of Signal Temporal Logic (STL).
STL is a specification language for systems whose output traces have real-valued timestamps -- most prominently, cyber-physical systems. Its robust semantics, which assign real rather than binary truth values, are an essential ingredient of systems testing and runtime monitoring.
In a number of system testing scenarios, the designer and the tester are separate parties. This testing-as-a-service model is common in software, avionics, and semiconductors, and is growing in importance in the red-teaming of machine learning models.
In such a model, the designer seeks to preserve the privacy (or secrecy) of its design, while the tester seeks to preserve the privacy of its collateral, including the detailed specification suite.
This paper presents a secure multi-party implementation of the robust semantics using arithmetic circuits.
This implementation allows one party, which provides a secret trace, and a second party which provides a secret formula, to jointly compute the robustness without either party being able to learn the other party's secret input.
We present a number of algorithmic modifications to the robustness computation to enable an efficient multi-party robustness computation. We also present new, semi-linear, robust semantics for STL that take advantage of the efficiencies of arithmetic circuits. 
The experimental results demonstrate that not only is this privacy-preserving monitor efficient on its own, but also that it can be used in a testing loop without affecting the speed at which bugs are found during robustness-guided system testing.
\end{abstract}

\IEEEpeerreviewmaketitle

\section{Introduction}
Monitoring, also known as runtime verification, is a key routine in the testing and runtime operation of Cyber-Physical Systems (CPS). 
The basic setup is one where an executing system generates an output signal $\signal$, which is a function of time, that is supposed to satisfy some specification $\formula$ in an appropriate temporal logic. 
A \emph{monitor} receives that signal and determines whether it satisfies the specification.
Monitoring is used at runtime (during system operation) to detect spec violations, 
or at design time, during system testing, to guide the test case generator. 
When monitoring or testing systems with physical components (e.g., cyber-physical systems), the specification is typically formalized in a dense-time logic, i.e., one which places real-time constraints on the signals and is interpreted over traces with real-valued timestamps. Moreover, one is especially interested in \textit{robust} satisfaction, i.e. in obtaining a real-valued measure of how well the trace satisfies the specification or how badly it violates it, not just in a binary satisfies/violates answer.
This paper uses Signal Temporal Logic, or STL, as a specification logic \cite{maler2004monitoring}, and the max/min robust semantics of STL \cite{fainekos2009robustness} for quantitative satisfaction. These are very commonly used options in cyber-physical systems monitoring and testing. 

In some applications, the monitored trace and the formal specification are owned by two different parties, and there's a need to keep each of them a secret known only to its owner. 
For example, \emph{Testing-as-a-Service} (TaaS), which is the practice of outsourcing some testing activities to an outside party, is common in industries like semi-conductor \cite{cadencesite,movellussite,rambussite}, aerospace \cite{collinstest}, fuel cell batteries \cite{avl}, and software~\cite{markresearch,rainforest,testlio,councill99swtesting}.
There is significant economic incentive in building novel TaaS models: the global TaaS market in software was estimated at USD 4.54 billion in 2023~\cite{gvr24} and at USD 2.5 billions in semi-conductors~\cite{lucintel24market}.
The design company has clear incentives for keeping its design secret. In particular, this implies keeping the system's output traces secret, since otherwise the tester might learn to map input traces (which it generates) to output traces. 
The testing company, on the other hand, possesses refined and targeted correctness specifications which it developed over years of experience with different designs. These specifications steer the test case generator towards effective tests. The testing company is thus incentivized to keep them secret. 
A key challenge of TaaS is therefore the protection of the intellectual properties (IP) of both the design and the testing companies, namely the design's output signals (and the design that produced them), and the tester's correctness specifications. 
\begin{figure}[t]
\centering
  \includegraphics[width=\linewidth]{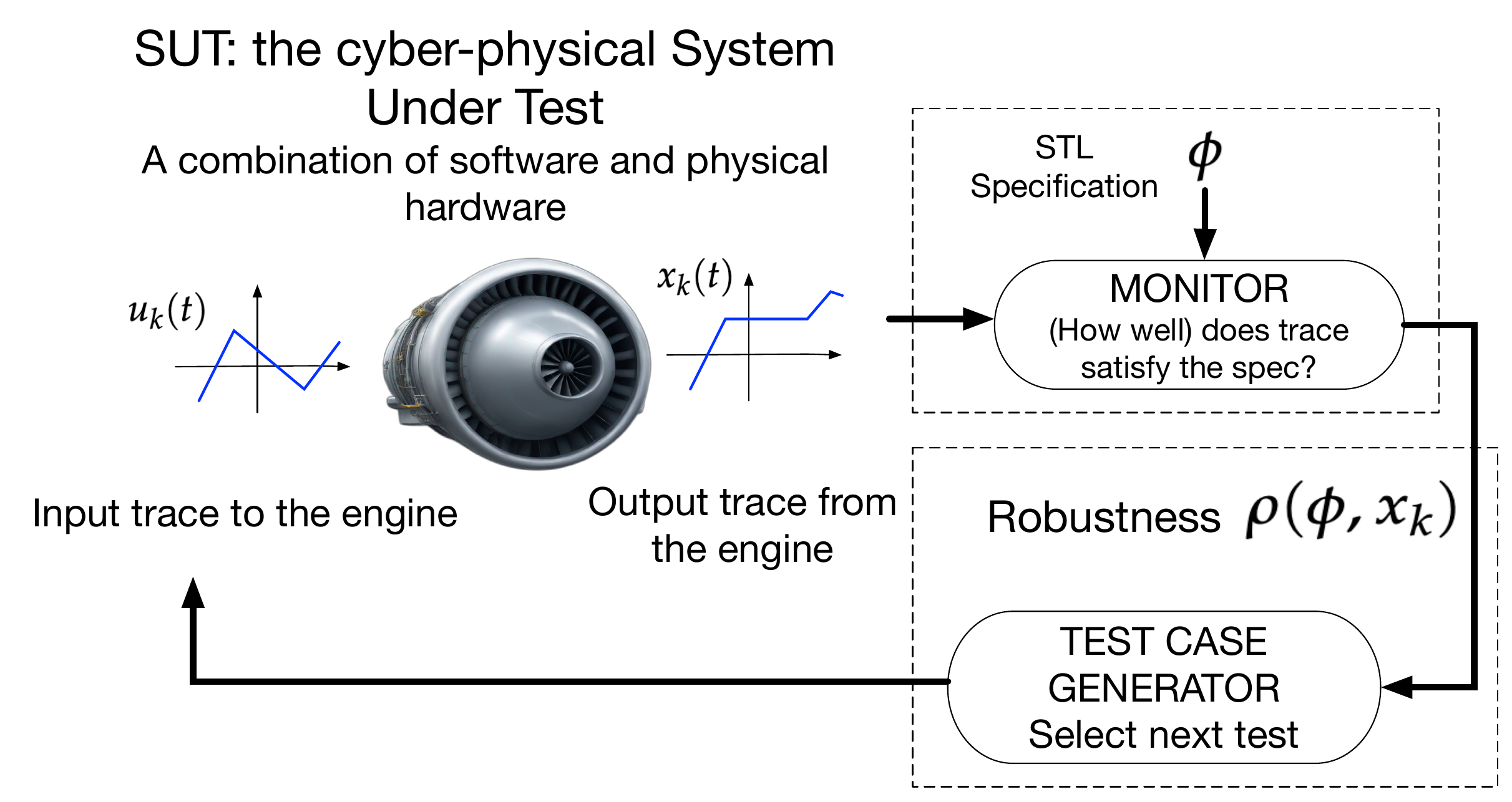}
  \caption{CPS monitoring-and-testing loop. A CPS System Under Test (SUT) takes an input signal and produces an output signal. The robustness of the output signal w.r.t. to an STL specification is computed by the monitor and used to guide the test generator. The signal is a proprietary secret to the system designer, while the formula is a proprietary secret of the testing company.}
  \label{fig:loop}
\end{figure}

Thus we are faced with the following problem: given two parties, Designer who owns a secret dense-time trace and Verifier who owns an STL formula, how can they compute the robustness of the trace relative to the formula, so that both learn the robustness, but Verifier does not learn anything else about the monitored trace, and Designer does not learn anything about what formula the trace is monitored against?
Such an implementation is called \textit{privacy-preserving}, or simply \textit{private}.
Non-Disclosure Agreements (NDAs) provide penalties for leaking trace and formula beyond the two parties, but cannot ensure that design and formula secrets are not leaked between parties.

In this paper, we use a cryptographic technique called secure multi-party computation \cite{evans2018pragmatic}, aka MPC, to create a privacy-preserving implementation of STL robustness monitoring. Our work is \emph{offline}, meaning that the robustness monitoring only occurs once Designer has a full trace. We seek to answer the following research questions:
\begin{enumerate}
    \item Can we develop an MPC implementation of robustness monitoring that is fast enough for use in a robustness-guided testing loop?
    \item Does private monitoring negatively affect the rate at which robustness-guided testing finds bugs? By bug finding rate, we mean the number of tests needed to find a bug. This is distinct from the time it takes to monitor a given output trace, produced by the system in response to a given test.
\end{enumerate}
MPC is known to add significant computational overhead, so merely dropping an algorithm into a general-purpose MPC framework will not be enough. Algorithmic and domain-specific theoretical modifications will be necessary, and these constitute the core of this paper.

\noindent
\emph{\textbf{Related Work.}\:}
There are very few works on private monitoring. 
The concept of private offline monitoring was introduced for LTL in \cite{abbas19prv}, which also sketched a number of approaches to preserving the privacy of the monitored signal -- but not of the LTL formula.
Banno et al. \cite{banno22prv} define a scheme for preserving the privacy of both a \emph{safety-only} LTL formula and a signal, using fully homomorphic encryption \cite{chillotti2020tfhe} and the LTL formula's representation as a DFA (Deterministic Finite Automaton).
We work with the full Signal Temporal Logic (STL) required for CPS designs, so both our signals and formulas are significantly different. 
In particular, STL formulas in dense time do not have a DFA representation, so work on oblivious DFA evaluation, like \cite{blanton10obliviousdfa} and \cite{sasakawa2014oblivious}, is not applicable here.
Along similar lines, works such as \cite{waga2025oblivious} rely on characteristics of discrete-time STL for their approach, which do not work in our dense-time setting.
Henzinger et al. \cite{henzinger2025privacy} have the same goal of private runtime verification, but approach the problem differently -- they encode the formula specification as a circuit, while we implement a functionality that takes the formula as an input. Additionally, their experiments focus on simple functionality such as comparators and adders, whereas we compute robustness and assume more complex specifications.


\noindent
\emph{\textbf{Contributions.}\:}
This work makes the following contributions:
\begin{enumerate}
    \item An approach for an MPC-based approach to private STL robustness monitoring for any formula up to a given depth;
    \item A novel robust semantics of STL that computes more efficiently than traditional robust semantics in arithmetic circuits;
    \item An experimental confirmation of the practicality of private monitoring using the MP-SPDZ framework \cite{mp-spdz}. 
    \item An experimental confirmation that the private computation does not degrade the rate at which robustness-driven testing finds bugs.
\end{enumerate}

Along with this document we have also submitted an artifact containing code for our experiments.

\noindent
\emph{\textbf{Organization.}\:}
\autoref{sec:prelims} provides background regarding STL, robustness, and multi-party computation; \autoref{sec:optimizing robustness} describes our variants of robustness monitoring algorithms which are efficient for MPC, including new robustness semantics; \autoref{sec:experiments} presents experimental results; and \autoref{sec:conclusion} concludes.

\section{Technical Background}\label{sec:prelims}

\subsection{Signal Temporal Logic (STL) and its Robustness}
Signal Temporal Logic (STL) \cite{maler2004monitoring} is commonly used in the formal verification of cyber-physical systems. STL is interpreted over dense-time signals.
\begin{definition}
\label{def:signal}
Fix constants $C \in \nnreals$, $T \in \nnreals$ and $\siglen \in \Ne$.
    A \emph{signal} $\signal$ is a tuple $(\sigvals, \sigtimes) \in [-C, C]^{\siglen} \times [0, T]^{\siglen}$. 
    This says that at timestamp $\sigtimes(i)$ the signal value is $\sigvals(i)$. The timestamps obey $0 = \sigtimes(1) < \sigtimes(i) < \sigtimes(i + 1) < \sigtimes(\siglen) = T$ for all $1 < i < \siglen - 1$. We write $t_i$ and $x_i$ as shorthand for $\sigtimes(i)$ and $\sigvals(i)$, respectively.
\end{definition}
The constants $C$, $T$, and $\siglen$ are the same for all signals. We also define an inverse function $\sigtimes^{-1}(I) \defeq \{i \in \nats \mid \sigtimes(i) \in I\}$, where $I \subseteq [0, T]$. This function returns the indices of sampling times in $I$.

The STL syntax (in negation normal form) is as follows:
\begin{equation*}
    \begin{split}
        \phi \defeq& \top \mid \bot \mid x \geq c \mid \lnot(x \geq c) \mid \phi_1 \lor \phi_2 \mid \phi_1 \land \phi_2 \\
        &\mid \phi_1 \until{I} \phi_2 \mid \phi_1 \release{I} \phi_2
    \end{split}
\end{equation*}
where $c \in [-C, C]$ is a constant, $I \subseteq [0, T]$ is a closed nonnegative interval, ``$x \geq c$'' is an atomic proposition, and $\phi_1$ and $\phi_2$ are STL formulas.
The Eventually and Always operators are defined via Until and Release respectively -- Eventually: $\eventually{I} \phi \defeq \top \until{I} \phi$, and Always: $\always{I} \phi \defeq \bot \release{I} \phi$.

There exist several notions of robustness \cite{fainekos2009robustness} -- we focus on the \emph{\maxminsem} semantics, which are the most commonly used in testing and monitoring. 

\begin{definition}[Robustness \cite{fainekos2009robustness}]
\label{def:maxmin semantics}
    Let $\phi$ be an STL formula, $\signal$ a signal, and $i$ a sampling index of $\signal$. Write $K(i) \defeq \sigtimes^{-1}(\sigtimes(i) + I)$ for brevity, where $I$ is an interval $I = [a, b]$ and $\sigtimes(i) + I$ is defined as the interval $[\sigtimes(i) + a, \sigtimes(i) + b]$. The \maxminsem{} robustness $\rob{\phi}(\signal, i)$ is computed as follows.
    \begin{align*}
        \rob{\top}(\signal, i) &\defeq +\infty, \qquad \quad \rob{\bot}(\signal, i) \defeq -\infty         \\
        \rob{x \geq c}(\signal, i) &\defeq \sigvals(i) - c,\quad \rob{\lnot(x \geq c)}(\signal, i) \defeq c - \sigvals(i) \\
        \rob{\phi_1 \lor \phi_2}(\signal, i) &\defeq \max(\rob{\phi_1}(\signal, i), \rob{\phi_2}(\signal, i)) \\
        \rob{\phi_1 \land \phi_2}(\signal, i) &\defeq \min(\rob{\phi_1}(\signal, i), \rob{\phi_2}(\signal, i)) \\
        \rob{\phi_1 \until{I} \phi_2}(\signal, i) &\defeq \max_{j \in K(i)}\ \min\lp\rob{\phi_2}(\signal, j), \min_{i \leq k < j}\rob{\phi_1}(\signal, k)\rp \\
        \rob{\phi_1 \release{I} \phi_2}(\signal, i) &\defeq \min_{j \in K(i)} \max\lp\rob{\phi_2}(\signal, j), \max_{i \leq k < j}\rob{\phi_1}(\signal, k)\rp
    \end{align*}
\end{definition}

From these semantics it follows that 
$\rob{\eventually{I} \phi}(\signal, i) = \displaystyle{\max_{j \in K(i)}}\rob{\phi_2}(\signal, j)$ and $\rob{\always{I} \phi}(\signal, i) = \displaystyle{\min_{j \in K(i)}}\rob{\phi_2}(\signal, j)$.

A positive robustness value indicates satisfaction, a negative value indicates violation, 0 is inconclusive, and the magnitude $|\rob{\phi}(\signal, i)|$ indicates degree of satisfaction or violation \cite{fainekos2009robustness}. 

\subsection{Dynamic Program for Robustness Calculation}
\label{sec:dp-taliro}
Robustness calculation is done using a dynamic program (DP) known as DP-TALIRO\cite{yang2013dynamic}. We begin describing it with an example.

\begin{figure}[t]
    \centering
\begin{subfigure}[b]{0.24\textwidth}
    \caption{\mbox{Signal} $\signal$.}
    \centering
    \begin{tabular}{|c||c|c|c|c|}
        \hline
        $\sigtimes$  & 0.0 & 5.0 & 7.0 & 10.0 \\
        \hline
        $\sigvals$  & 3.0 & 11.0 & -2.0 & -3.0 \\
        \hline
    \end{tabular}    
    \label{tab:dp example signal}
\end{subfigure}
\hfill
\begin{subfigure}[t]{0.4\textwidth}
\centering
    \begin{tikzpicture}[
  every node/.style = {inner sep=3pt, anchor=center},
  level distance=1cm,
  level 1/.style={sibling distance=2cm}
]

  \node {$\until{[4,9]}$}
    child { node {$x \geq 0$} }
    child { node {$\neg(x \geq 10)$} };

\end{tikzpicture}
    \caption{The syntax tree of formula $\formula$ given in \eqref{eq:dp example formula}.}
    \label{fig:dp example formula}
\end{subfigure}
\hfill
\begin{subfigure}[b]{0.4\textwidth}
    \caption{DP table $R$ for computing $\rob{\formula}(\signal, \cdot)$.}
    \centering
    \begin{tabular}{|l||c|c|c|}
        \hline
        & $\until{[4,9]}$ & $x \geq 0$ & $\neg(x \geq 10)$ \\
        \hhline{|=#=|=|=|}
        $t_1 = 0.0$  & $3$ & $3$ & $7$ \\
        \hline
        $t_2 = 5.0$  & $-2$ & $11$ & $-1$ \\
        \hline
        $t_3 = 7.0$  & $-\infty$ & $-2$ & $12$ \\
        \hline
        $t_4 = 10.0$ & $-\infty$ & $-3$ & $13$ \\
        \hline
    \end{tabular}    
    \label{tab:dp example table}
\end{subfigure}
\caption{Robustness computation for the example in \autoref{sec:dp-taliro}.}
\end{figure}

\emph{\textbf{Monitoring Example.}\:}
Consider the signal shown in \autoref{tab:dp example signal} and the following formula:
\begin{equation}\label{eq:dp example formula}
    \phi: (x \geq 0)~\until{[4,9]}\, \neg(x \geq 10)
\end{equation}

The DP constructs a table $R$ with rows labeled by the timestamps $t_i$ and columns labeled by the sub-formulas $\phi_j$ of $\phi$, as shown in \autoref{tab:dp example table}. 
The DP then fills in the cells, starting with the bottom right cell, working right to left and bottom to top. 
Cell $R[t_i,\phi_j]$ contains the value $\rob{\phi_j}(\signal,i)$.
The calculation ends with $\rob{\phi}(\signal, 0)$ filled into the top left cell of the table. 
For convenience, cell $R[t_i, \phi_j]$ will be referred to simply as $R[i,j]$. We now detail this calculation.

Cell $R[4,3]$ contains $\rob{\neg (x \geq 10)}(\signal,10) = 10 - (-3)=13$. 
The next cell to the left $R[4,2]$ similarly contains $\rob{\neg (x \geq 0)}(\signal,10) = -3 -0=-3$. 
The next cell, $R[4,1]$, is the first that requires previously computed cells. To compute $\rob{\phi}(\signal, 4)$, first the samples that fall within the interval $t_4 + [4,9]$ are identified using the \bounds{} routine 
(see Appendix \ref{appendix:bounds} for details). 
In this example, no samples exist in the interval $10 + [4,9]$, so $R[4,1]$ is set to $-\infty$ as per the semantics.

The DP moves up one row to $t_3$ and performs similar operations. 
The first two cells, containing atoms, are filled with $R[3,3]=10 - (-2.0) = 12$ and $R[3,2]=-2.0 - 0 = -2$. 
The leftmost cell, $R[3,1]$, requires a \bounds{} calculation which again yields no samples in the interval $7+[4,9]$ and so the result is $-\infty$.
When the DP reaches the second row, $t_2$, on the other hand, \bounds{} determines that sample $t_4=10$ falls within $5+[4,9]$. The semantics then dictate that the robustness is (after simplification) $R[2,1]=\min(R[2,2], R[3,2], R[4,3]) = -2$.
The algorithm continues in this way until it fills $R[1,1]$.
\qed

\begin{algorithm}
    \KwIn{$\phi$ in breadth-first traversal of its syntax tree $[\phi_1, \dots, \phi_m]$, signal $s=[(x_1, t_1), \dots, (x_n, t_n)]$}
    \KwOut{Robustness $\rob{\phi}(\signal,1)$}
    \BlankLine

        \tcc{$b_l[j],b_u[j]$ hold lower \& upper indices of column $j$'s interval}
        $b_l, b_u := [\infty, \dots, \infty], [-\infty, \dots, -\infty]$ (length $m$ each)\;
        \tcc{Iterate row by row through all cells of the table $R$}
        \ShowLn\label{algoln:outer loop and diagonal}\For{$i := n \KwTo 1$, $j := m \KwTo 1$}{
            \Switch{$\phi_j$}{
                \ShowLn\label{algoln:atoms}\lCase{$\top~/~\bot$}{$R[i, j] := +\infty/~-\infty$}                
                \ShowLn\label{algoln:atoms2}\lCase{$p(c)~/~\neg p(c)$}{$R[i, j] := x_i - c~/~c - x_i$}
                \ShowLn\label{algoln:nonatoms}\lCase{$\phi_{k_1} \lor \phi_{k_2}$}{$R[i, j] := \max(R[i, k_1], R[i, k_2])$}
                \ShowLn\label{algoln:nonatoms2}\lCase{$\phi_{k_1} \land \phi_{k_2}$}{$R[i, j] := \min(R[i, k_1], R[i, k_2])$}
                \uCase{$\phi_{k_1} \until{I} \phi_{k_2}$}{
                    \tcc{Some special cases can be handled quickly}
                    \lIf{$i = n \text{ and } 0 \in I$}{$R[i, j] := R[i, k_2]$}
                    \lElseIf{$i = n \text{ and } 0 \notin I$}{$R[i, j] := -\infty$}
                    \lElseIf{$I = [0, T]$}{$R[i, j] := \max(R[i, k_2], \min(R[i, k_1], R[i+1, j]))$}
                    \Else(\tcp*[h]{General case}){
                        \ShowLn\label{algoln:bounds1} $b_l[j], b_u[j] := \textbf{\bounds}(I, i, [t_1, \dots, t_n], n, b_l[j], b_u[j])$\;
                        $tmp_{min} := \min_{i \leq i' < b_l[j]} R[i', k_1]$\;
                        $R[i, j] := -\infty$\;
                        \ShowLn\label{algoln:secondinnerloop}\For{$i' := b_l[j] \KwTo b_u[j]$}{
                            $R[i, j] := \max(R[i, j], \min(R[i', k_2], tmp_{min}))$\;
                            $tmp_{min} := \min(tmp_{min}, R[i', k_1])$\;
                        }
                    }
                }
                \uCase{$\phi_{k_1} \release{I} \phi_{k_2}$}{
                    \lIf{$i = n \text{ and } 0 \in I$}{$R[i, j] := R[i, k_2]$}
                    \lElseIf{$i = n \text{ and } 0 \notin I$}{$R[i, j] := +\infty$}
                    \lElseIf{$I = [0, T]$}{$R[i, j] := \min(R[i, k_2], \max(R[i, k_1], R[i+1, j]))$}
                    \Else{
                        \ShowLn\label{algoln:bounds2} $b_l[j], b_u[j] := \mathbf{\bounds}(I, i, [t_1, \dots, t_n], n, b_l[j], b_u[j])$\;
                        $tmp_{max} := \max_{i \leq i' < b_l[j]} R[i', k_1]$\;
                        $R[i, j] := +\infty$\;
                        \For{$i' := b_l[j] \KwTo b_u[j]$}{
                            $R[i, j] := \min(R[i, j], \max(R[i', k_2], tmp_{max}))$\;
                            $tmp_{max} := \max(tmp_{max}, R[i', k_1])$\;
                        }
                    }
                }
                \ShowLn\label{algoln:unnecessary1} \uCase{$\eventually{I} \phi_{k}$ or $\always{I} \phi_{k}$}{
                \tcp{omitted for brevity}
                }
            }
        }
        \Return{$R[1, 1]$}
    \caption{DP-TALIRO \cite{yang2013dynamic}}
    \label{algo:dp-taliro}
\end{algorithm}

The general DP-TALIRO algorithm is now provided, since our MPC implementation will modify parts of it to obtain an efficient calculation. 
DP-TALIRO contains two components: the primary robustness calculation, shown in \autoref{algo:dp-taliro}, and a subroutine \bounds{} that selects which elements of a signal fall within the bounds of a temporal interval; this is shown in Appendix~\ref{appendix:bounds}. The two components perform the row-by-row, right-to-left table filling described in the example. Correctness of DP-TALIRO is proved in \cite{yang2013dynamic}.

The inputs to DP-TALIRO are a sequence $[(x_1, t_1), \dots, (x_{\siglen}, t_{\siglen})]$ representing the signal $\signal$, provided by the Designer party; 
and an STL formula $\phi$ represented as a sequence of its sub-formulas, obtained in a breadth-first traversal of the formula's syntax tree. This is provided by the Tester party.
Let that sequence be $[\phi_1, \dots, \phi_{\formulalen}]$, with $\phi_1 = \phi$ being the root. 
We assume that the formula's syntax tree is complete -- this is necessary for the MPC implementation, which requires all inputs to have a fixed size. There is no loss of generality in this assumption since a formula without a complete tree can be padded with $\top$ to form a complete tree without affecting the robust semantics of \autoref{def:maxmin semantics}.

The subroutine \bounds{} (called at lines \ref{algoln:bounds1} and \ref{algoln:bounds2} of \autoref{algo:dp-taliro}) takes two arguments, an index $i$ and an interval $I \subseteq [0, T]$, and returns the minimum and maximum values in $\sigtimes^{-1}(\sigtimes(i)+I)$. It does this by simply iterating through the signal starting at $i$. Once found, these two indices are passed to the primary robustness calculation to be used for an Until. 

Note that even though Eventually and Always operators are derived from the Until and Release, it is more efficient to provide a custom implementation for them, as shown on line \ref{algoln:unnecessary1}. 
As will be seen in Section \ref{sec:optimizing robustness}, this is no longer true when implementing the algorithm as a multi-party computation. 

Let $\siglen$ be the length of the input signal, let $\formulalen$ be the number of nodes of the input formula's syntax tree, and let $c$ be the maximum number of sampling points for any interval $I$ in the formula, at most $\siglen$. The complexity of the DP-TALIRO algorithm is $O(\siglen \cdot \formulalen \cdot c)$ \cite{yang2013dynamic}.

\subsection{Secure Multi-Party Computation (MPC)}
\label{sec:mpc background}
Secure Multi-Party Computation (MPC) is a branch of cryptography that tackles the following problem:
Consider two parties, Alice and Bob; Alice holds private value $x$, and Bob holds private value $y$.
Together they wish to compute the value $f(x,y)$, for some agreed upon functionality or algorithm $f$ that takes $x$ and $y$ as inputs. 
For example, $x$ can be a dense-time signal, $y$ can be an STL formula, and $f$ can compute the robustness of $x$ with respect to $y$.
\textit{But each would like the other to learn nothing more than the output of $f$} -- in particular, that they learn nothing about each other's inputs, even if Alice or Bob behaves adversarially.
A \emph{secure} or \emph{privacy-preserving} interactive MPC protocol is a sequence of steps that Alice and Bob take in order to 
\begin{itemize}
    \item compute and learn the value $f(x,y)$, [Correctness]
    \item and learn nothing else. [Privacy]
\end{itemize}

What does it mean formally that the parties `learn nothing more than the output of $f$'? For instance, under what assumptions on the parties' computational power?
These questions have all received a rigorous treatment in the MPC literature -- e.g. \cite{evans2018pragmatic}, \cite{Cramer_Damgrd_Nielsen_2015}, and \cite{canetti01UC}.
A self-contained overview of security definitions is far beyond the scope of the current paper, but we do give a minimal set of definitions in Appendix \ref{sec:security guarantees} for completeness. The informal statement of security is as follows:
\begin{definition}[Secure protocol (Intuitive)]
\label{def:intuitive security}
    An MPC protocol implements functionality $f$ in a secure manner w.r.t. semi-honest parties (parties that implement the protocol without deviation, but might try to learn something extra based on the messages they see) if it is highly improbable that the parties could learn anything from executing the protocol, that they couldn't learn anyway just from knowing their inputs and the value $f(x,y)$.
\end{definition}
The intuitive understanding of security is sufficient to follow the rest of this paper, but whenever we reference a `secure protocol', formally this means secure in the sense of the definitions of Appendix \ref{sec:security guarantees}.
The following is a well-known result. 
Arithmetic circuits are explained below.
\begin{theorem}
\label{thm:general feasibility}
    Any functionality $f$ that can be computed in time that is polynomial in the input sizes can also be computed by any secure interactive MPC protocol. In particular, it can be computed by a secure interactive MPC protocol for arithmetic circuits \cite{Cramer_Damgrd_Nielsen_2015}. 
\end{theorem}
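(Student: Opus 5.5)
The plan is to reduce the statement to two standard facts: polynomial-time computations have polynomial-size circuits, and arithmetic circuits over a finite field can be evaluated securely gate by gate. I will also read ``any secure interactive MPC protocol'' as ``a secure interactive MPC protocol of a standard generic kind'', namely one that evaluates arbitrary circuits (such as GMW, Yao, or secret-sharing-based protocols). Read literally the claim is too strong, so the argument below establishes the existential form.

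\emph{Step 1: from algorithms to circuits.} Fix input lengths, as the paper does by fixing $\siglen$, $C$, $T$ and requiring complete syntax trees, so that the inputs have a fixed size. A polynomial-time Turing machine computing $f$ can then be unrolled into a Boolean circuit of polynomial size via the standard tableau construction (Cook--Levin / Pippenger--Fischer). Each Boolean gate is emulated over a finite field $\mathbb{F}_p$ by an arithmetic gate: $\lnot a = 1-a$, $a\land b = ab$, and $a\lor b = a+b-ab$. This yields an arithmetic circuit over $\mathbb{F}_p$ of polynomial size computing $f$ on encoded inputs. Real-valued data such as signal values and timestamps are handled by fixed-point encoding into field elements, so ``$f$'' here means its finite-precision version.

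\emph{Step 2: secure evaluation of arithmetic circuits.} Next I invoke the BGW/CCD-style result, with a two-party variant for the semi-honest setting. I would present it as follows:
\begin{enumerate}
\item The parties share inputs additively.
\item Addition and multiplication by a constant are computed locally.
\item Each multiplication gate is computed using Beaver triples, which come from an offline phase based on OT or homomorphic encryption.
\item The output is opened at the end.
\end{enumerate}
Privacy is argued by a simulator. Every opened value in a multiplication (the masked $a-u$, $b-v$) is uniformly distributed independently of the inputs, so it can be simulated with random field elements. The final output shares can be simulated given $f(x,y)$. Correctness follows from the identity $ab = (a-u)(b-v)+(a-u)v+(b-v)u+uv$. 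By the standard composition theorem (see the Appendix definitions), the gate-by-gate protocol is secure against semi-honest adversaries, and its cost is polynomial in the circuit size.

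\emph{Main obstacle.} Everything above is textbook material and can be cited to \cite{Cramer_Damgrd_Nielsen_2015}. The genuinely delicate point is the modelling gap between real-valued $f$ and field arithmetic. I would address it by restricting to bounded-precision inputs, which the bound $C$ and finite $T$ justify, and by choosing $p$ larger than all intermediate magnitudes so that no wraparound occurs. The simulation-based security definition from the Appendix then applies verbatim to the encoded functionality.
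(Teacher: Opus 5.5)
Your argument is correct. The paper gives no proof of this statement: it calls it a well-known result and defers entirely to \cite{Cramer_Damgrd_Nielsen_2015}. Your sketch therefore reconstructs what that citation supplies, and your existential reading of ``any'' is how the paper actually uses the result in Theorem~\ref{thm:robustness can be mpced}.

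What your explicit version adds is that it exposes an assumption the bare citation hides. With $n=2$ and one corrupted party there is no honest majority, so the information-theoretic BGW/CCD results do not apply. General two-party secure computation also cannot be information-theoretically secure without setup. The statement is therefore really conditional on a computational assumption such as the existence of OT or additively homomorphic encryption. That is exactly what your offline triple generation relies on, and it is consistent with the computational-indistinguishability definition in the Appendix. Your OT-based Beaver-triple protocol is in fact essentially the semi-honest form of the MASCOT protocol the paper runs in MP-SPDZ. So state the assumption outright rather than calling the protocol ``BGW/CCD-style,'' which is misleading here.

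Your stated ``main obstacle'' is partly misplaced. If each party inputs its fixed-point encodings bit by bit, which a semi-honest party will do correctly, every wire of your Step~1 circuit carries a bit. Wraparound and the size of $p$ then never arise. Those concerns belong to the native fixed-point field arithmetic of the paper's implementation, not to the feasibility proof. The only modelling point the proof needs is that $f$ be a finite-precision function, so that ``polynomial time'' makes sense at all.
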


There are indeed general-purpose \emph{compilers} that accept a description of $f$ and implement such an interactive protocol.
In this work, we use one such compiler called MP-SPDZ \cite{mp-spdz}, in which $f$ is specified in a subset of Python.
While compilers like MP-SPDZ give a turnkey solution for MPC, the resulting protocol is often prohibitively inefficient in practice.
It is usually necessary to tailor the computation $f$ to be more ``MPC-friendly.''
The following are the most important factors affecting the performance of MPC protocols.
They will guide our modifications to DP-TALIRO to make it more efficient \textit{within an MPC}.

\textbf{Circuits:} all MPC protocols relevant for our purposes first express the desired functionality $f$ as a strictly combinational \emph{circuit} and then, using cryptographic techniques, ``blindly evaluate'' the circuit on private data.
The circuit may be boolean (consisting of \textsc{and, or, not} gates) or \textbf{arithmetic} (addition and multiplication gates in a finite field): both can implement any computable function \cite{Cramer_Damgrd_Nielsen_2015}.
MPC protocols represent intermediate secret values in a way that is only compatible with one of these two classes of operations.
Because robustness calculation uses max, min and addition, which are most naturally expressed as numerical calculations, \emph{we opt for an arithmetic circuit protocol in this paper.}

\textbf{Circuits imply fixed input size}: given that $f$ is represented as a circuit, this circuit has a fixed number of input wires, so the inputs $x,y$ must have fixed sizes. In this paper this means all signals have the same length, and all formulas have the same depth and have complete syntax trees. As explained earlier, a formula with an incomplete tree can be padded without changing its semantics.

\textbf{Flow control / branching:}
Circuits have no flow control, so a corollary of the above is that \emph{flow control based on private data is extremely expensive for MPC protocols}.
For example, consider the expression ``[if $b$ then $C_0(x)$ else $C_1(x)$]''.
If $b$ is a secret value (one that cannot be inferred from the overall function's output), then the parties must not learn which branch of this if-statement is active.
To convert the expression to a circuit, one must include sub-circuits for both $C_0$ and $C_1$.
Both sub-circuits must be evaluated, and the ``correct'' result is chosen by a multiplexer controlled by bit $b$.

Any form of flow control based on private data leads to a significant blow-up in the size of the resulting circuit.
In fact, MP-SPDZ's dialect of Python explicitly forbids static branching on private data, having a for-loop with data-dependent bounds, etc.
Typically one must unroll all branching (as in the if-statement example) and pad for-loops to their worst-case maximum length, etc.

\textbf{Linearity is free:}
The parties in an MPC protocol will collectively hold a ``blind'' representation of the values along every wire of the circuit.
These representations are \emph{linearly homomorphic},
meaning that they can perform linear operations on private data without any communication between the parties.
In an arithmetic circuit, the linear operation is the field's addition operation.
Consequently, multiplication by a \emph{public} constant is also free.
The freeness of addition will motivate the creation of a new `semi-linear' semantics for STL.

\textbf{Circuit depth and parallelism:}
MPC uses interactive protocols, where the  parties communicate. 
In an arithmetic MPC protocol, the parties must exchange protocol messages for each (non-free) gate of the circuit.
Furthermore, if the output of gate $g$ influences the input of gate $g'$, then the communication for $g$ and $g'$ must happen \emph{sequentially.}
Gates that do not influence each other in the circuit can have their messages sent in parallel.
So an entire layer of a circuit can be processed in the same network round-trip, but consecutive layers require additional round-trips.
Thus, often the limiting factor in MPC is the \emph{depth} of the circuit (the maximum path-length from an input gate to an output gate).

It is preferable, therefore, to tailor the computation $f$ so that its corresponding circuit $f$ is ``wide and shallow,'' with minimal depth.
\textit{When we describe a computation as ``parallel'', we are referring to how amenable it is to reducing circuit depth,
and the resulting effect on the MPC protocol's network latency.}
This kind of parallelism is orthogonal to whether the MPC participants perform their (local) cryptographic operations in a single thread or multiple threads.

\section{A Secure Multi-Party Computation of Robustness}
\label{sec:optimizing robustness}

Robustness is computed by a dynamic program, which takes time polynomial in the size of the inputs $\signal$ and $\formula$. 
Thus by Theorem \ref{thm:general feasibility} in Section \ref{sec:mpc background}, any interactive MPC protocol to securely evaluate $\rob{\formula}$ using an arithmetic circuit can be used. 
This simple observation is captured in the following for later reference.
\begin{theorem}
\label{thm:robustness can be mpced}
    The robustness functionality described in Def.~\ref{def:maxmin semantics}, with inputs $\formula$ and $\signal$, can be computed in a secure manner by an interactive MPC protocol for arithmetic circuits.
\end{theorem}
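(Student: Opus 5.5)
The plan is to reduce Theorem~\ref{thm:robustness can be mpced} to Theorem~\ref{thm:general feasibility}. That means exhibiting the robustness functionality $f(\signal,\formula) = \rob{\formula}(\signal,1)$ as a polynomial-time computation over fixed-size inputs, and then making explicit that it has an arithmetic-circuit realization. The inputs are fixed-size by construction. The signal is an element of $[-C,C]^{\siglen}\times[0,T]^{\siglen}$ with $\siglen$ public. The formula is a complete syntax tree of a fixed public depth, hence with a fixed number $\formulalen$ of nodes. Each node is encoded by a constant number of field elements: an operator tag from a finite alphabet, a constant $c$, and interval endpoints $a,b$. Formulas with incomplete trees are padded with $\top$, which, as noted in Section~\ref{sec:dp-taliro}, does not change the \maxminsem{} semantics. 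Real values are represented in fixed point, and $\pm\infty$ is replaced by public sentinels $\pm M$ with $M$ larger than any attainable finite robustness; $M > 2C$ suffices, since atoms yield values in $[-2C,2C]$ and max/min never leave that range.

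Next comes the polynomial running time. DP-TALIRO (\autoref{algo:dp-taliro}) computes $\rob{\formula}(\signal,1)$ in time $O(\siglen\cdot\formulalen\cdot c)\subseteq O(\siglen^2\formulalen)$, and its correctness is established in \cite{yang2013dynamic}. This alone already satisfies the hypothesis of Theorem~\ref{thm:general feasibility}, so the statement follows.

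To make the arithmetic-circuit claim concrete rather than relying only on the generic theorem, I would sketch the oblivious form of the DP. First, for every cell $R[i,j]$, compute the candidate values of all operator types: atoms $x_i-c$ and $c-x_i$, $\max$ and $\min$ of the children, and the Until and Release expressions. Then select the correct one with a multiplexer driven by the secret one-hot tag of $\phi_j$. The child positions $k_1=2j$ and $k_2=2j+1$ are public, because the tree is complete and listed in breadth-first order. For the temporal operators, replace \bounds{} by the secret membership bits $\beta_{i,i'} = [\,t_i+a \le t_{i'}\,]\cdot[\,t_{i'}\le t_i+b\,]$ for all $i'\ge i$. Loops then run over the public range $i'=i,\dots,\siglen$. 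Each term $\min(R[i',k_2],\min_{i\le k<i'}R[k,k_1])$ is masked to $-M$ when $\beta_{i,i'}=0$, and the maximum is taken over all $i'$; Release is dual. Comparisons, $\max$, $\min$, and multiplexers all have standard secure arithmetic-circuit subprotocols, such as bit decomposition or secure comparison in MP-SPDZ. The resulting circuit therefore has size $O(\siglen^2\formulalen)$ comparison gadgets and computes exactly the DP values.

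The main obstacle is the faithfulness of this encoding. Masking must reproduce the empty-interval case (value $-\infty$ for Until and $+\infty$ for Release) and the running minimum over $[i,j)$ exactly. Sentinel arithmetic must also never overflow the field. I would verify this by induction on rows $i=\siglen,\dots,1$ and columns right to left, showing that each masked cell equals the corresponding DP-TALIRO cell under the sentinel map $\pm\infty\mapsto\pm M$. Security then follows directly from the security of the underlying arithmetic MPC protocol invoked in Theorem~\ref{thm:general feasibility}, since the circuit is public and no control flow depends on secret data.
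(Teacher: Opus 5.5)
Your proposal is correct and follows the paper's argument. The paper's proof is exactly your second paragraph: the robustness dynamic program runs in time polynomial in the sizes of $\signal$ and $\formula$, so Theorem~\ref{thm:general feasibility} applies. Your oblivious-circuit sketch is not needed for the theorem. It is sound, however, and it essentially anticipates the paper's variant \variantI{}: loops over public ranges, interval-membership tests built into the loop in place of \bounds{} (modification M3), and evaluation of every operator case for every cell.
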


\begin{algorithm}
    \SetNoFillComment
    \LinesNumberedHidden
    \KwIn{$\phi$ in breadth-first traversal of its syntax tree $[\phi_1, \dots, \phi_m]$, signal $s = [(x_1, t_1), \dots, (x_n, t_n)]$}
    \KwOut{the robustness $\rob{\phi}(\signal, 1)$.}
    \BlankLine

        \tcp{Cells for atomic propositions can be computed concurrently.}
        \ShowLn\label{algoln:outerloop1}\ForPar{$i := n \KwTo 1$, $j := m \KwTo (m/2)$}{
            \Switch{$\phi_j$}{
                \lCase{$\top~/~\bot$}{$R[i, j] := +\infty~/~-\infty$}
                \lCase{$p(c)/~\neg p(c)$}{$R[i, j] := x_i - c/~ c - x_i$}
            }
        }
        \tcp{Fill bottom row. $d$ = Depth of $\formula_j$.}
        \ShowLn\label{algoln:outerloop2}\For{$d := (\log_2(m + 1) - 1) \KwTo 1$}{
            \ForPar{$j := (2^{d-1}) \KwTo (2^{d} - 1)$}{
                \Switch{$\phi_j$}{
                    \lCase{$\top~/~\bot$}{$R[n, j] := +\infty~/~ -\infty$}
                    \lCase{$p(c)~/~\neg p(c)$}{$R[n, j] := x_i - c~/~ c - x_i$}
                    \lCase{$\phi_{k_1} \lor \phi_{k_2}$}{$R[n, j] := \max(R[n, k_1], R[n, k_2])$}
                    \lCase{$\phi_{k_1} \land \phi_{k_2}$}{$R[n, j] := \min(R[n, k_1], R[n, k_2])$}
                    \uCase{$\phi_{k_1} \until{I} \phi_{k_2}$ \label{algoln:until bot row}}{
                        \lIf{$0 \in I$}{$R[n, j] := R[n, k_2]$}
                        \lElse{$R[n, j] := -\infty$}
                    }
                    \uCase{$\phi_{k_1} \release{I} \phi_{k_2}$ \label{algoln:release bot row}}{
                        \lIf{$0 \in I$}{$R[n, j] := R[n, k_2]$}
                        \lElse{$R[n, j] := +\infty$}
                    }
                }
            }
        }
        \tcc{Traverse remaining cells diagonally from bottom right to top left.}
        \ShowLn\label{algoln:diagonal}\For{$diag := (n + \log_2(m + 1) - 4) \KwTo 1$}{
            \ForPar{$d := (\log_2(m + 1) - 1) \KwTo 1$}{
                $i := diag - d$\;
                \ForPar{$j := (2^{d-1}) \KwTo (2^{d} - 1)$}{
                    \Switch{$\phi_j$}{
                        \lCase{$\top~/~\bot/p(c)/\neg p(c)/\phi_{k_1} \lor \phi_{k_2}/\phi_{k_1} \land \phi_{k_2}$}{same as above}
                        \uCase{$\phi_{k_1} \until{I} \phi_{k_2}$  \label{algoln:until}}{
                            $R[i, j] := -\infty$; $tmp_{min} := +\infty$\;
                            \ShowLn\label{algoln:singleinnerloop}\For{$h := i \KwTo n$}{
                                \If{$t_h \in t_i + I$}{
                                    $\tau = \min(tmp_{min}, R[h, k_2])$\;
                                    $R[i, j] := \max(\tau, R[i, j])$
                                }
                                $tmp_{min} := \min(R[h, k_1], tmp_{min})$
                            }
                        }
                        \uCase{$\phi_{k_1} \release{I} \phi_{k_2}$ \label{algoln:release}}{
                            $R[i, j] := +\infty$; $tmp_{min} := -\infty$\;
                            \For{$h := i \KwTo n$}{
                                \If{$t_h \in t_i + I$}{
                                    $\tau = \max(tmp_{max}, R[h, k_2])$\;
                                    $R[i, j] := \min(\tau, R[i, j])$
                                }
                                $tmp_{max} := \max(R[h, k_1], tmp_{max})$
                            }
                        }
                    }
                }
            }
        }
        \Return{$R[1, 1]$}
    \caption{Variant \variantI}
    \label{algo:variant1}
\end{algorithm}

Simply feeding the DP-TALIRO code to MP-SPDZ yields a very inefficient private calculation. 
So we first did a number of functionality-preserving modifications to DP-TALIRO, guided by an understanding of which MPC operations are efficient - yielding variant \variantI{} below. 
From there, we explored two orthogonal approaches, based on the observation that the max/min operations of the robust semantics are the biggest bottleneck:
first, we modified the robust semantics by replacing some of the max/min operations by additions, which are free in MPC.
Second, we restricted ourselves to the unbounded fragment of STL, which eliminates all $n$-ary max/min operations with $n>2$. 
Because these two approaches are orthogonal, they can be mixed-and-matched, yielding four variants in total: \variantI{} (with full STL and min/max semantics), \variantLIN{} (with linearized semantics), \variantUNB{} (for the unbounded STL fragment), and \variantLINUNB.
We start with the variant \variantI, which forms the basis of the other variants.

\subsection{\variantI: Algorithmic Optimizations for MPC}
\label{sec:variant v1}
\variantI{} is shown in \autoref{algo:variant1}. Its correctness as a robustness monitor is inherited from that of DP-TALIRO, as all modifications preserve functionality. 
\variantI{} focuses on parallelizations: as explained in Section \ref{sec:mpc background}, parallelization means reducing the number of operations that require sequential communication within the MPC.
Additionally, \variantI{} avoids conditionals that compute the same thing across multiple branches. 
That's because in MPC, all branches of a conditional are computed, because determining the correct branch by testing a secret value is forbidden.
This fact results in optimizations which would otherwise appear very unintuitive in traditional algorithm design.
\variantI{} introduces four significant modifications:
\begin{enumerate}[start=1,label={M\arabic*.}]
    \item Separation of the outer loop into three independent loops;
    \item Traversing the table diagonally along the depth of the formula;
    \item Removing the \bounds{} subroutine, building its operation directly into the algorithm; and
    \item Removing the Eventually/Always operators, which are inefficient in MPC. Instead, they are evaluated by re-writing them using the Until.
\end{enumerate}

\noindent
\emph{\textbf{M1: Separation of the outer loop.}\:}
We separate the outer loop of DP-TALIRO (\autoref{algoln:outer loop and diagonal} in \autoref{algo:dp-taliro}) into smaller loops at lines \ref{algoln:outerloop1}, \ref{algoln:outerloop2}, and \ref{algoln:diagonal} in \autoref{algo:variant1}.
This allows for increased parallelism. The first loop, \autoref{algoln:outerloop1}, computes all cells in the right half of the DP table concurrently, since they contain logical constants and atoms. This is the right half of \autoref{tab:diagonal table traversal example} page \pageref{tab:diagonal table traversal example}. 

The second loop in \autoref{algo:variant1}, \autoref{algoln:outerloop2}, computes the cell values for the last element of the signal, $(x_{\siglen}, t_{\siglen})$. We separate these because the temporal operators, evaluated at the last element of the signal, require fewer computations than they would elsewhere in the signal. In \autoref{tab:diagonal table traversal example}, this is the left half of the last row.
The third loop, \autoref{algoln:diagonal}, computes all cell values not otherwise addressed by the first and second loops, and is described in M2 below.

\begin{table}[t]
    \caption{A robustness table illustrating modifications M1 and M2 of \variantI{} (\autoref{algo:variant1}). As indicated, sub-formulas $\phi_2$ and $\phi_3$ are at depth 2 in the syntax tree, and $\phi_4-\phi_7$ are at depth 3. The first loop of $V1$ fills all the cells on the right half of the table concurrently; the second loop fills the cells in the last row concurrently; and the third loop iterates over the remaining cells in the order $\times, +, \bullet, \triangle, \star$, with all cells with a given symbol being filled concurrently.}
    \centering
    \setlength{\tabcolsep}{4pt} 
    \scriptsize
    \begin{tabular}{|c||c|c|c|c|c|c|c|c|c|c|c|}
        \hline
        & $\phi_1$ & $\phi_2$ & $\phi_3$ & $\phi_4$ & $\phi_5$ & $\phi_6$ & $\phi_7$ & $p$ & $q$ & $\dots\dots\dots$ & $\top$ \\ \hline
        & Depth 1 & \multicolumn{2}{c|}{Depth 2} & \multicolumn{4}{c|}{Depth 3} & \multicolumn{4}{c|}{Depth 4} \\
        \hline
        $t_1$ & \cellcolor{blue!15}$\star$ & \cellcolor{green!15}$\triangle$ & \cellcolor{green!15}$\triangle$ & \cellcolor{yellow!25}$\bullet$ & \cellcolor{yellow!25}$\bullet$ & \cellcolor{yellow!25}$\bullet$ & \cellcolor{yellow!25}$\bullet$ & \multicolumn{4}{c|}{\multirow{3}{*}{Atoms \& Constants:}} \\ \cline{1-8}
        $t_2$ & \cellcolor{green!15}$\triangle$ & \cellcolor{yellow!25}$\bullet$ & \cellcolor{yellow!25}$\bullet$ & \cellcolor{orange!25}$+$ & \cellcolor{orange!25}$+$ & \cellcolor{orange!25}$+$ & \cellcolor{orange!25}$+$ & \multicolumn{4}{c|}{\multirow{3}{*}{First Loop}} \\ \cline{1-8}
        $t_3$ & \cellcolor{yellow!25}$\bullet$ & \cellcolor{orange!25}$+$ & \cellcolor{orange!25}$+$ & \cellcolor{red!20}$\times$ & \cellcolor{red!20}$\times$ & \cellcolor{red!20}$\times$ & \cellcolor{red!20}$\times$ & \multicolumn{4}{c|}{} \\ \cline{1-8}
        $t_4$ & \multicolumn{7}{c|}{Last Row: Second Loop} & \multicolumn{4}{c|}{} \\ \hline
    \end{tabular}
    \label{tab:diagonal table traversal example}
\end{table}

\noindent
\emph{\textbf{M2: Traversing the table diagonally (\autoref{algoln:diagonal}).}\:}
The third loop increases parallel computation: rather than the row-by-row approach provided by \autoref{algo:dp-taliro}, \autoref{algo:variant1} traverses the left half of the table (which contains the robustnesses of non-atoms) in a diagonal fashion as shown in \autoref{tab:diagonal table traversal example}. 
With an input signal of length 4 and input formula $\bot \release{I_1} (\top \until{I_2} ((x \geq 0) \land \lnot(x \geq 10)))$ (depth 4, length 15), the loop has 5 iterations: in the first iteration it computes all cells marked by $\times$ concurrently, in the second it fills all cells marked by $+$ concurrently, etc. This significantly increases parallelism.

\noindent
\emph{\textbf{M3: Removing the \bounds{} subroutine.}\:}
In DP-TALIRO (\autoref{algo:dp-taliro}), the \bounds{} subroutine identifies the signal timestamps that fall within a given temporal interval $I$.
To do so it must compare timestamps with interval bounds, both of which are secret values. 
In MPC, no control flow based on testing secret values is allowed (as explained in Section \ref{sec:mpc background}).
So in MPC, all $n - i$ comparisons would have to be made, where $i$ is the current timestamp index. Furthermore, the loop after the \bounds{} subroutine iterates over the secret values returned by \bounds{} (see \autoref{algoln:secondinnerloop} in \autoref{algo:dp-taliro}), and thus must also perform $n - i$ iterations when done in MPC.
These two sets of $n - i$ iterations are highly inefficient in MPC.
So for the MPC variant \variantI{}, we combine these two sets into a single loop of $n - i$ iterations, e.g. \autoref{algoln:singleinnerloop} of \autoref{algo:variant1}. 
This loop checks if the iteration's timestamp is inside the bounds, and performs its calculations if so. This modification is applied to both Until and Release cases.


\noindent
\emph{\textbf{M4: Removing the Eventually and Always cases.}\:}
Because every branch of a switch statement is executed in MPC, a switch statement with fewer cases executes faster than one with more cases. 
Therefore, we remove the Eventually and Always cases (line \ref{algoln:unnecessary1} in DP-TALIRO). Any such operators in the input formula are first re-written using Until.

\noindent
\emph{\textbf{Complexity.}\:}
The complexity of this algorithm when executed in the clear (not in MPC) is $O(\siglen^2 \cdot \formulalen)$; the added complexity of \autoref{algo:variant1} compared to \autoref{algo:dp-taliro} is due to removal of data-dependent bounds on loops in the temporal operators.
This algorithm's sequential complexity, i.e. its amount of sequential operations, is $O(\siglen^2 + \siglen \cdot \log_2(\formulalen))$. This is from the $O(\siglen + \log_2(\formulalen))$ sequential cell traversal of the table (see \autoref{tab:diagonal table traversal example}), and the $O(n)$ operations per cell for the temporal operators.

Algorithm \variantI{} can be naturally transformed into an arithmetic circuit using MP-SPDZ, yielding the following. 

\begin{theorem}
    \variantI{} can be expressed as an arithmetic circuit of size $O(\siglen^2 \cdot \formulalen)$ over a suitably large field, and therefore there is an interactive MPC protocol that securely computes robustness with cost $O(\siglen^2 \cdot \formulalen)$.
\end{theorem}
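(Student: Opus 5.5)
The plan is to build the circuit gate by gate from the control structure of \autoref{algo:variant1}, check that every loop bound and branch depends only on public quantities, and add up the gate counts. The public quantities are $\siglen$, $\formulalen$, the depth $\log_2(\formulalen+1)$, and the fixed syntax-tree shape. A wire is a secret field element: a fixed-point encoding of a real value or a selector bit. We take the field $\Ne_p$ with $p$ large enough to represent all values in $[-C,C]$ and $[0,T]$ at the chosen precision, together with the two sentinels $\pm\infty$, for which we use $\pm M$ with $M$ exceeding every reachable robustness value. In a suitably large field, comparison ($a\geq b$), and hence $\max$, $\min$ and interval membership $t_h\in t_i+I$, is a standard subcircuit. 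Its size depends only on the bit length $\ell=O(\log p)$, which is fixed and so counts as $O(1)$ relative to $\siglen$ and $\formulalen$. We then use $\max(a,b)=b+[a\geq b](a-b)$ and $\min$ analogously.

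Next we handle the secret switch on $\phi_j$. Each node is encoded as a fixed-length tuple: a one-hot operator type, the constant $c$, and the interval endpoints $a,b$. The child indices $k_1=2j$ and $k_2=2j+1$ are public because the tree is complete and stored breadth-first. The switch is therefore compiled, as described in Section~\ref{sec:mpc background}, by evaluating every case on the same inputs and combining the results with the one-hot selector. This costs a constant factor, since after M4 there are only eight cases.

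Then we count cells and per-cell cost:
\begin{itemize}
\item There are $\siglen\formulalen$ cells in $R$.
\item Atoms, constants, $\lor$ and $\land$ need $O(1)$ gates per cell.
\item The bottom-row temporal cases need one comparison each, to test $0\in I$.
\item A general Until or Release cell runs the inner loop over $h=i,\dots,\siglen$ with public bounds. Each iteration uses one membership test (two comparisons), a $\max$/$\min$, a multiplexer that replaces the secret \textbf{if}, and one update of $tmp$. That is $O(1)$ gates per iteration, so $O(\siglen)$ per cell.
\end{itemize}
Since every cell evaluates all branches, the total is $O(\siglen\cdot\formulalen\cdot\siglen)=O(\siglen^2\formulalen)$. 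The diagonal and parallel loop structure affects only depth, not size. The circuit computes the same value as DP-TALIRO because modifications M1--M4 preserve functionality, and the sentinels $\pm M$ faithfully emulate $\pm\infty$ under $\max$ and $\min$.

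Finally, we apply Theorem~\ref{thm:general feasibility}, in the form of Theorem~\ref{thm:robustness can be mpced}, to this particular circuit. Standard arithmetic-circuit MPC protocols for semi-honest parties cost $O(1)$ field operations and communication per multiplication gate, with additions free. The protocol cost is therefore $O(\siglen^2\formulalen)$ up to the fixed factor in $\ell$ and the security parameter.

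The main obstacle is justifying that comparison, and hence the secret branching on membership in $t_i+I$, costs $O(1)$. This needs an explicit choice of field size and fixed-point precision so that the bit-decomposition-based comparison has bounded size, and an argument that the sentinel $M$ never overflows, since sums $x_i-c$ stay within $[-2C,2C]$. Once this is settled the counting is routine.
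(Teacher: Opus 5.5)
Your proposal is correct and follows the same route as the paper. The paper justifies the bound only briefly: once data-dependent loop bounds are removed, \variantI{} runs in $O(\siglen^2 \cdot \formulalen)$ time in the clear with no secret control flow, so it compiles directly to an arithmetic circuit of that size, and the generic arithmetic-circuit MPC feasibility result (Theorem~\ref{thm:robustness can be mpced}) gives the protocol; your cell-by-cell count, multiplexing of the secret operator switch, and treatment of comparisons and $\pm\infty$ sentinels make explicit what the paper leaves implicit.
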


\subsection{\variantVI: Leveraging Free Additions}
\label{sec:variant linear semantics}
Much of the computational cost of \variantI{} stems from the max/min operations which are used for evaluating every logical operator. 
These calculations are expensive due to their nonlinearity. 
We now propose a novel \emph{\linearsem{} semantics} of robustness, $\robapprox{\phi}(\signal, i)$, which replaces some (but not all) of the max/min calculations with linear operations. 
By using this new semantics in \variantI~instead of the max/min semantics, and slightly modifying the Until calculation as described below, we obtain variant \variantVI.

\begin{definition}\label{def:approx semantics}
Abbreviate $K(i) = \sigtimes^{-1}(\sigtimes(i) + I)$.
The \linearsem{} semantics of STL are as follows.
\begin{align}
        \robapprox{\top}(\signal, i) &\defeq +\infty ,\qquad  \robapprox{\bot}(\signal, i) \defeq 0\\
        \robapprox{x \geq c}(\signal, i) &\defeq \max(\sigvals(i) - c, 0)\\
        \robapprox{\lnot(x \geq c)}(\signal, i) &\defeq \max(c - \sigvals(i), 0) \\
        \robapprox{\phi_1 \lor \phi_2}(\signal, i) &\defeq \frac{\robapprox{\phi_1}(\signal, i) + \robapprox{\phi_2}(\signal, i)}{2} \label{eq:or semantics}\\
        \robapprox{\phi_1 \land \phi_2}(\signal, i) &\defeq \min(\robapprox{\phi_1}(\signal, i), \robapprox{\phi_2}(\signal, i)) \\
        \robapprox{\phi_1 \until{I} \phi_2}(\signal, i) &\defeq \frac{1}{|K(i)|} \sum_{j \in K(i)} \min \Bigl( \robapprox{\phi_2}(\signal, j) , \nonumber \\
        &\qquad \qquad\quad \min_{i \leq k < j} \robapprox{\phi_1}(\signal, k) \Bigr) \label{eq:until semantics}\\
        \robapprox{\phi_1 \release{I} \phi_2}(\signal, i) &\defeq \min_{j \in K(i)} \frac{1}{j - i + 1}\biggl( \robapprox{\phi_2}(\signal, j) ~+ \nonumber\\
        &\qquad \qquad\quad \sum_{i \leq k < j}\robapprox{\phi_1}(\signal, k) \biggr) . \label{eq:release semantics}
    \end{align}
\end{definition}

The soundness of the new semantics is captured in the following theorem, which is proved in Appendix~\ref{appendix:semanticsproof}.

\begin{theorem}\label{thm:linear sem props}
    Given a signal $\signal$, an index $i$, and formula $\phi$, the following hold
    \begin{enumerate}
        \item $\rob{\phi}(\signal, i) > 0 \implies 0 < \robapprox{\phi}(\signal, i) \leq \rob{\phi}(\signal, i)$
        \item $\rob{\phi}(\signal, i) \leq 0 \implies \robapprox{\phi}(\signal, i) = 0$.
    \end{enumerate}
\end{theorem}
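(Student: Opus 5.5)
The proof goes by structural induction on $\phi$, using a slightly stronger invariant that implies both claims of Theorem~\ref{thm:linear sem props}. For every subformula $\psi$ and every index $i$ we prove
\[
0 \leq \robapprox{\psi}(\signal, i) \leq \max\bigl(\rob{\psi}(\signal, i), 0\bigr)
\quad\text{and}\quad
\robapprox{\psi}(\signal, i) > 0 \iff \rob{\psi}(\signal, i) > 0 .
\]
This gives the theorem directly. If $\rob{\phi}>0$, the right-hand bound is $\rob{\phi}$ and the equivalence gives positivity. If $\rob{\phi}\leq 0$, the bound is $0$, which forces $\robapprox{\phi}=0$. We need the intermediate invariant because the operators of Def.~\ref{def:approx semantics} combine subformula values in ways where only nonnegativity and the sign pattern propagate cleanly.

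\emph{Base cases.} For $\top$ both semantics give $+\infty$. For $\bot$ we have $\rob{\bot}=-\infty$ and $\robapprox{\bot}=0$, which is consistent with the invariant. For the atoms, $\max(x_i-c,0)$ is exactly $\max(\rob{x\geq c},0)$, so both parts of the invariant hold trivially; negated atoms are the same.

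\emph{Propositional cases.} For $\land$, $\min$ is monotone and commutes with $\max(\cdot,0)$, which gives $\min(a',b')\leq\min(\max(a,0),\max(b,0))=\max(\min(a,b),0)$. Also $\min(a',b')>0$ holds iff both $a',b'>0$, which by induction holds iff both $a,b>0$. For $\lor$, an average of two nonnegative numbers is nonnegative and bounded by their maximum, so $\tfrac{a'+b'}{2}\leq\max(\max(a,0),\max(b,0))=\max(\max(a,b),0)$. It is positive iff one of $a',b'$ is positive, which holds iff $\max(a,b)>0$.

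\emph{Until.} First treat the inner term. For fixed $j$, let
\[
m'_j = \min\Bigl(\robapprox{\phi_2}(\signal,j),\ \min_{i\leq k<j}\robapprox{\phi_1}(\signal,k)\Bigr)
\]
and let $m_j$ be the corresponding expression in the max/min semantics. The $\land$ argument, applied to a finite min, gives $0\leq m'_j\leq\max(m_j,0)$ and $m'_j>0\iff m_j>0$. The outer operation averages the $m'_j$ over $K(i)$. This average lies between $0$ and $\max_j m'_j\leq\max(\max_j m_j,0)$, and it is positive iff some $m'_j>0$, i.e.\ iff $\max_j m_j>0$. That matches $\rob{\phi_1\until{I}\phi_2}$.

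\emph{Release.} This case mixes an average with a min. Each inner average $\frac{1}{j-i+1}\bigl(\robapprox{\phi_2}(\signal,j)+\sum_{k}\robapprox{\phi_1}(\signal,k)\bigr)$ is an average of nonnegative terms. It is therefore bounded by their maximum, which by induction is at most $\max\bigl(\max(\rob{\phi_2}(\signal,j),\max_k\rob{\phi_1}(\signal,k)),0\bigr)$. It is positive iff some term is positive, i.e.\ iff that max is positive. The outer $\min_{j\in K(i)}$ then preserves both properties, as in the $\land$ case.

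\emph{Main obstacle: degenerate cases and infinities.} These cases are where I expect to spend the most care, and the plan is to fix conventions explicitly before running the induction. When $K(i)=\emptyset$, $\rob{\until{}}=-\infty$, and we take the empty average to be $0$. $\rob{\release{}}=+\infty$, and the empty min is $+\infty$ on both sides. Some values are $+\infty$ (from $\top$), so we adopt extended-real arithmetic with nonnegative terms: a sum or average containing $+\infty$ is $+\infty$. With this convention, "average $\leq$ max" and "average $>0$ iff some term $>0$" still hold. The $\bot$ case also deserves a check: it is the only place where $\robapprox{\cdot}$ and $\max(\rob{\cdot},0)$ are not produced by the same formula. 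It is consistent with the invariant because $\max(-\infty,0)=0$.
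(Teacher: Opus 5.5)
Your proposal is correct and follows essentially the same route as the paper: a structural induction that, connective by connective, uses the facts that a min of nonnegative values is positive iff all of them are, and that an average of nonnegative values is bounded by their maximum and is positive iff some term is. Your combined invariant $0 \le \robapprox{\psi}(\signal,i) \le \max(\rob{\psi}(\signal,i),0)$ together with the sign equivalence is equivalent to the two claims rather than stronger; the paper's split into the cases $v \le 0$ and $v > 0$ relies on the same simultaneous hypothesis implicitly, for instance when it uses $v_2'(i) \geq 0$ in its $\land$ case. Your explicit convention for $K(i)=\emptyset$ in the Until case fills a gap the paper leaves open, since its definition then involves the undefined factor $1/|K(i)|$.
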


\begin{algorithm}
    \uCase{$\phi_{k_1} \until{I} \phi_{k_2}$}{
        $tmp_{min} := [+\infty, \dots, +\infty]$
        \tcp{length $n - i$}
        $bounds := [0, \dots, 0]$
        \tcp{length $n - i$}
        \For{$h := i \KwTo n$}{
            $tmp_{min}[h - i + 2] := \min(R[h, k_1], tmp_{min}[h - i + 1])$\;
            $tmp_{min}[h - i + 1] := \min(R[h, k_2], tmp_{min}[h - i + 1])$\;
            \lIf{$t_h \in t_i + I$}{
                $bounds[h - i + 1] := 1$
            }
        }
        $R[i, j] := (bounds \cdot tmp_{min}) / (\sum bounds)$
        \tcp{dot product}
    }
    \caption{\variantVI: Until Case}
    \label{algo:variant6}
\end{algorithm}

Aside from using the new semantics, \variantLIN{} mostly follows \variantI{}. 
The Until case (\autoref{algo:variant1} \autoref{algoln:until}) is implemented slightly differently, taking advantage of slight efficiencies of the dot product operation compared to $\max$ when implemented in MPC (namely, reduced binary operations; see \autoref{sec:experiments} for results supporting these efficiencies). We show only this case in \autoref{algo:variant6} for space reasons; this case and other modifications match the semantics in \autoref{def:approx semantics}.

\noindent
\emph{\textbf{Complexity.}\:}
When evaluated in the clear (not in an MPC protocol), the complexity of \variantVI{} is the same as that of \variantI: $O(\siglen^2 \cdot \formulalen)$.
The runtime gains come when implemented in MPC.
Sequential operations are the same as \variantI{} as well: $O(\siglen^2 + \siglen \cdot \log_2(\formulalen))$.

The one-sided aspect of the semi-linear semantics, evident in \autoref{thm:linear sem props}, has an effect on duality -- for example, with formulas such as $\lnot p \land \lnot q$ and $p \lor q$, $\robapprox{\lnot p \land \lnot q}(s, i)$ is not the negation of $\robapprox{p \lor q}(s, i)$. We accept this unusual property of the semantics because we are testing until we find a bug; negative robustness values do not continue our testing loop, and so we only need positive values to inform testing. The experiments in the following section illustrate no penalty from using semi-linear semantics in a closed-loop falsification setting using simulated annealing.

\subsection{\variantIV: Restricting to Unbounded Temporal Operators}
\label{sec:variant unbounded}
Our next variant looks at one significant source of computational cost: evaluating Untils, which requires calling the BOUNDS routine, then doing $N$-ary max or min (where $N$ is a function of the interval size).
Therefore this variant, which is based on \variantI, restricts itself to formulas where all temporal operators are known to be of the form $[0,T]$, which is the largest possible interval.
We call it \variantIV{}.

In such a case, 
for the bottom row of the DP table (\autoref{algo:variant1} lines \ref{algoln:until bot row} and \ref{algoln:release bot row}), the Until evaluation is simplified to a single case $R[n, j] := R[n, k_2]$. 
For the other rows, the Until case (\autoref{algo:variant1} \autoref{algoln:until}) is replaced with $R[i, j] := \max(R[i, k_2], \min(R[i, k_1], R[i + 1, j]))$ and the Release case (\autoref{algo:variant1} \autoref{algoln:release}) is replaced with $R[i, j] := \min(R[i, k_2], \max(R[i, k_1], R[i + 1, j]))$. 

\noindent
\emph{\textbf{Complexity.}\:}
The complexity of this algorithm in the clear is $O(\siglen \cdot \formulalen)$, much less than that of \variantI. This is because for a single cell of the table we can compute a temporal operator in constant time, rather than the $O(\siglen)$ time required in \variantI.
The sequential complexity reduces similarly: $O(\siglen + \log_2(\formulalen))$. Table traversal does not become more or less sequential with \variantIV{} than with \variantI{}.

\subsection{\variantVII: Unbounded Fragment of STL with Semi-Linear Semantics}
\label{sec:variant lin-unb}
This last variant replaces the max/min semantics in \variantIV{} with the \linearsem{} semantics. 

\noindent
\emph{\textbf{Complexity.}\:}
Like \variantIV, the complexity of \variantVII{} is $O(\siglen \cdot \formulalen)$.
Sequential operations are likewise $O(\siglen + \log_2(\formulalen))$.

The four variants are summarized in \autoref{tab:variants}.

\begin{table}[t]
    \caption{The four variants and their complexities.}
    \centering
    \begin{tabular}{|c||c|c|c|}
        \hline
        Variant & Linear Robustness? & STL Fragment? & Complexity \\
        \hhline{|=#=|=|=|}
        \variantI & No & Full STL & $O(\siglen^2 \cdot \formulalen)$ \\
        \hline
        \variantVI & Yes & Full STL & $O(\siglen^2 \cdot \formulalen)$ \\
        \hline
        \variantIV & No & $I = [0, T]$ & $O(\siglen \cdot \formulalen)$ \\
        \hline
        \variantVII & Yes & $I = [0, T]$ & $O(\siglen \cdot \formulalen)$ \\
        \hline
    \end{tabular}    
    \label{tab:variants}
\end{table}

\subsection{Communication Complexity}
\label{sec:comm complexity}
The amount of data passed between parties (communication complexity) is dependent on the protocol used. The specifics for MASCOT are provided in \cite{cryptoeprint:2016/505}. 
For example, for pre-processing, Section 7.1 of \cite{cryptoeprint:2016/505} states that the communication cost of a single triple, which is a unit of data generated in pre-processing, is $n(n - 1)(\lambda k + k^2)$ bits where $n$ is the number of parties (2 in our case, Designer and Verifier), $\lambda$ is the security parameter (40 in our case, the default provided by MP-SPDZ), and $k$ is the number of bits of the modulo field (106 bits in our case). 
This suggests that the communication cost for a single pre-processing triple in our case is approximately 4 KB. Empirically, in our case the pre-processing data for signal length of 250 is approximately 4.2 GB. In the online phase, the data sent between parties is approximately 227 MB for that same signal length.
Other details on communication complexity can also be found in \cite{cryptoeprint:2016/505}.

\section{Experiments}\label{sec:experiments}

We aim to show that closed-loop testing is still practical when computing robustness privately. 
This involves two evaluations:
first, we must determine whether the runtimes of the various private variants are acceptable, and which variant is faster under which conditions.
Secondly, since variants \variantVI~and \variantVII~use the new \linearsem{} semantics, we must determine whether testing -- aka falsification -- still successfully finds bugs when guided by this new robustness notion.
(The other two variants use the classical semantics and so this question does not arise for them).
All experiments were conducted on a machine with a AMD Ryzen 7 PRO 7840U CPU and 16 GB of LPDDR5 RAM.

\subsection{Monitoring Results}
\label{sec:runtimes}
For implementing the robustness computation in MPC we use MP-SPDZ \cite{mp-spdz}, a commonly-used framework for MPC. From this framework we select the MASCOT \cite{cryptoeprint:2016/505} protocol, modified for semi-honest security. 
A run of this computation consists of three phases -- (1) pre-processing, which generates random data to be used in the MASCOT protocol; (2) setup, which performs initialization of the protocol; and (3) online, which runs the protocol with the provided inputs $\phi$ and $\signal$. 
Phases (1) and (2) do not require the inputs: (1) can be run far in advance, and (2) is run immediately before the provision of inputs. 
In closed-loop testing, (2) can be run while the model is being simulated to generate the trace $\signal$. 


In this set of experiments in MP-SPDZ we use three altitude signals from the FASER 05 dataset of \cite{faser05uav} with various lengths: 10, 50, 100, and 250. We use the following five formulas, where $T = 1023$.

\begin{enumerate}
    \item $\bot \>\release{[0,1023]} (\top \>\until{[0,10]} ((x \geq 250) \land \lnot (x \geq 330)))$
    \item $\top \>\until{[0,3]} (\bot \>\release{[5,20]} (\lnot (x \geq 330) \lor (x \geq 400)))$
    \item $(\bot \>\release{[2,4]} (\lnot (x \geq 250) \lor (x \geq 330))) \>\until{[0,1023]} (x \geq 400)$
    \item $(\top \>\until{[0,10]} \lnot (x \geq 250)) \lor (\top \>\until{[200,300]} (x \geq 330))$
    \item $\lnot (x \geq 250) \>\release{[2,4]} (\lnot (x \geq 400) \>\until{[30,32]} (\lnot (x \geq 330) \land (x \geq 250)))$
\end{enumerate}

We run all combinations of inputs, with the three signals of four different lengths each and five formulas, for each of our four algorithms, for a total of 240 combinations.
When producing results in this section, we average the online runtime across all runs for a given pairing of algorithm and signal length. \textit{Since execution is done within MPC, all inputs of the same size yield the same runtime and memory consumption}; the only differences in execution times or memory arise from input sizes and the algorithms being computed.



\begin{figure}[t]
        \centering
        \includegraphics[width=\linewidth]{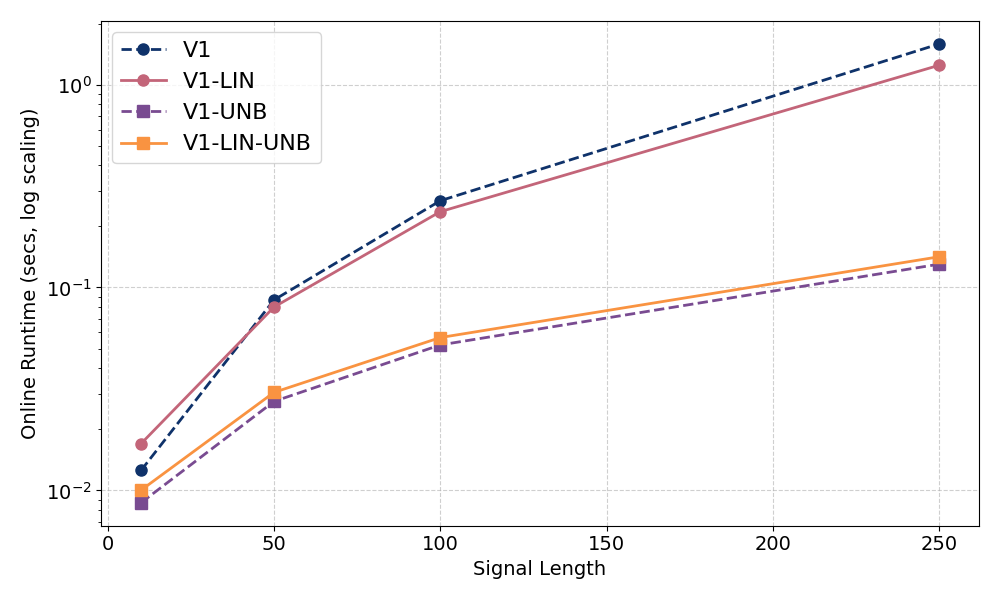}
        \caption{Online runtimes (seconds), log-linear scaling.}
        \label{fig:online runtime}
    \end{figure}
\begin{figure}[t]
        \centering
        \includegraphics[width=\linewidth]{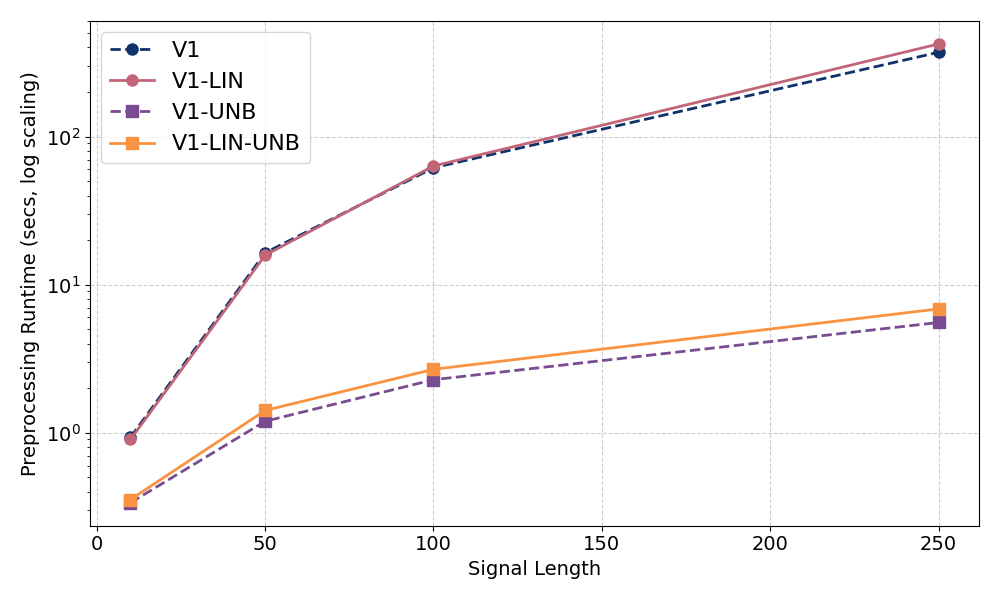}
        \caption{Preprocessing runtimes (seconds), log-linear scaling.}
        \label{fig:preprocessing runtime}
\end{figure}

\textbf{Online phase.}
In \autoref{fig:online runtime} we can see that the overall fastest variant is \variantUNB, with a runtime of 0.009 seconds at signal length 10 and 0.13 seconds at signal length 250.
For the bounded variants, the \linearsem{} semantics yield runtime savings; e.g. for signal length 250, 1.59 seconds vs 1.25 seconds for \variantI{} and \variantLIN, respectively.
But in the unbounded variants, the new semantics do not yield improvements (in fact it's a bit slower). This might be because the unbounded STL fragment only uses 2-place $\max/\min$ operations, and replacing some of these by a dot product is not noticeably faster, but rather requires extra time to convert to.

\begin{figure}[ht]
        \centering
        \includegraphics[width=\linewidth]{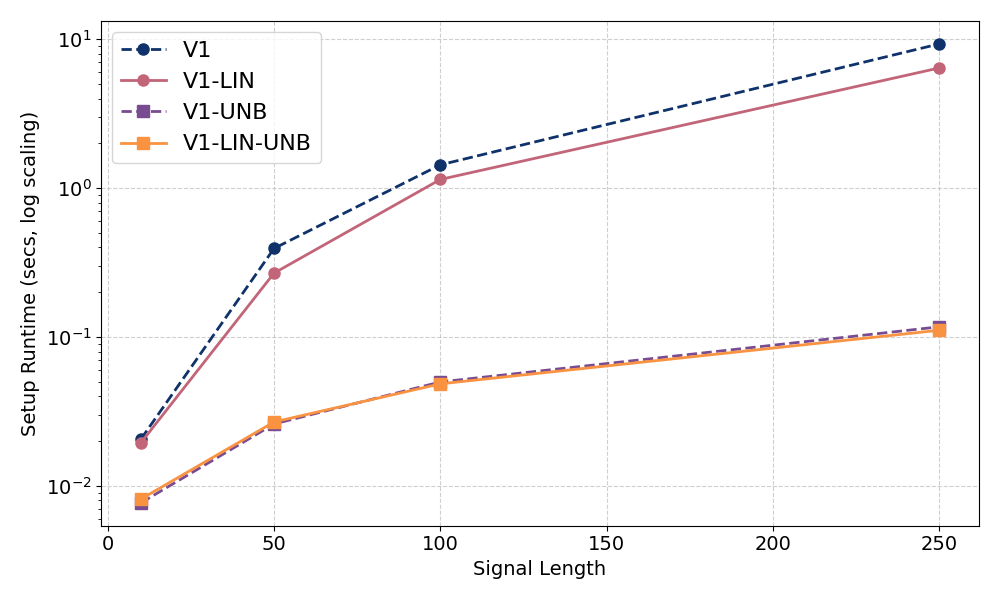}
        \caption{Setup runtime (seconds), log-linear scaling.}
        \label{fig:setup runtime}
\end{figure}
\begin{figure}[ht]
        \centering
        \includegraphics[width=\linewidth]{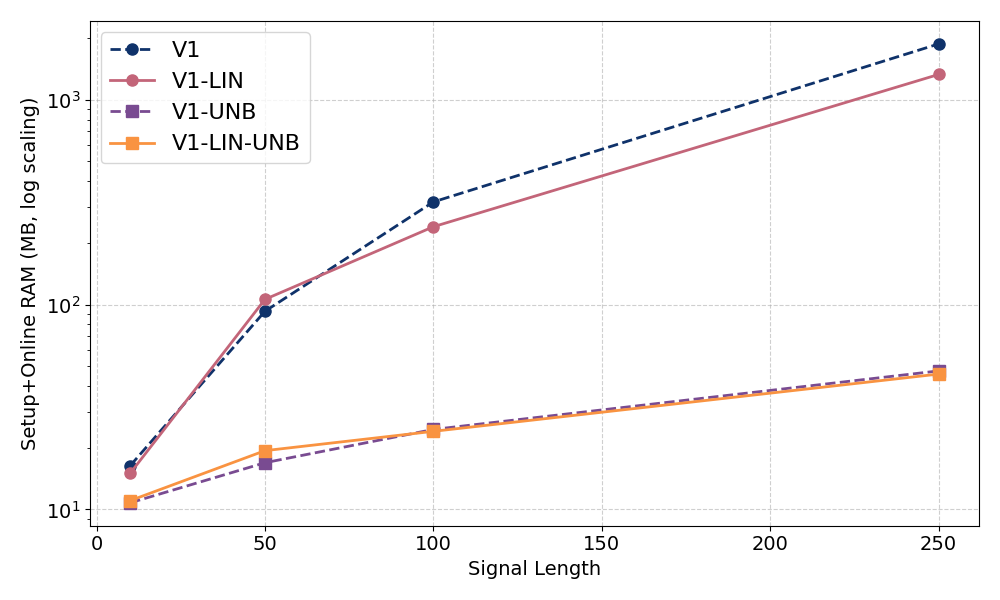}
        \caption{RAM of setup and online calculations (MB), log-linear scaling.}
        \label{fig:ram}
\end{figure}

\textbf{Pre-processing and setup phases.}
Figures \ref{fig:preprocessing runtime} and \ref{fig:setup runtime} show the runtimes for the pre-processing and setup phases, respectively. 
In the unbounded case, the \maxminsem{} semantics do better than the \linearsem{} semantics (at signal length 250, 5.55 seconds for \variantUNB{} vs 6.86 seconds for \variantLINUNB).
There is no significant difference between the semantics in the bounded case. 
For the setup phase, the only significant difference between semantics is in the bounded case -- \variantLIN{} beats \variantI{} with a time of 6.42 seconds versus 9.31 seconds, respectively for a signal length of 250. It is unclear what causes this increased performance; for setup in particular, performance gains are likely caused by implementation details within MP-SPDZ.
Finally, recall that pre-processing is done ahead of time independently of input formula and signal, and that setup can run in parallel with the system simulation. So as long as simulation is on the order of setup rutime, the overhead is negligible.

Across all runtime phases, we can see that \variantLIN{} is up to 30\% faster than \variantI{}, while \variantLINUNB{} is up to 20\% slower than \variantUNB. 
Moreover, the unbounded variants are considerably faster than the bounded variants. All runtimes grow polynomially as signal lengths increase (figures are log-linear scaling).

\textbf{RAM usage.}
The bounded variants use the most RAM: \variantLIN{} uses 1.3 GB for signal length 250, and \variantI{} uses 1.9 GB. 
The unbounded variants use considerably less -- about 46 MB for signal length 250.

These are trends similar to what we see with the runtimes. In addition to the unbounded variants outperforming the bounded variants, \variantLIN{} appears to outperform \variantI{} in RAM usage especially as signal lengths increase -- at signal length 250, \variantLIN{} uses approximately 45\% less RAM than \variantI{}, demonstrating that \variantLIN{} is useful for reducing both runtime and RAM usage. RAM scaling as a whole appears to match runtime scaling as signal lengths increase.


\subsection{Closed-Loop Testing}
\label{sec:closed-loop testing}
Robustness-guided testing, aka \textit{falsification}, searches for bugs by minimizing robustness over the space of system control signals: if it finds a control signal such that the system's output signal has negative robustness, then this is a specification violation.
We ran four system models: a bicycle, kite, quadcopter, and an F16 model from \cite{heidlauf2018verification}, We used three unique formulas per model, each with a depth 4 syntax tree. Formulas were chosen such that falsification rate was between 55\% and 90\% for \variantI{} after running a general simulated annealing loop with PSY-TaLiRo \cite{psytaliro} 1000 times. For the closed-loop testing we evaluated \variantI{} and \variantVI.

\begin{figure}[t]
    \centering
    \includegraphics[width=\linewidth]{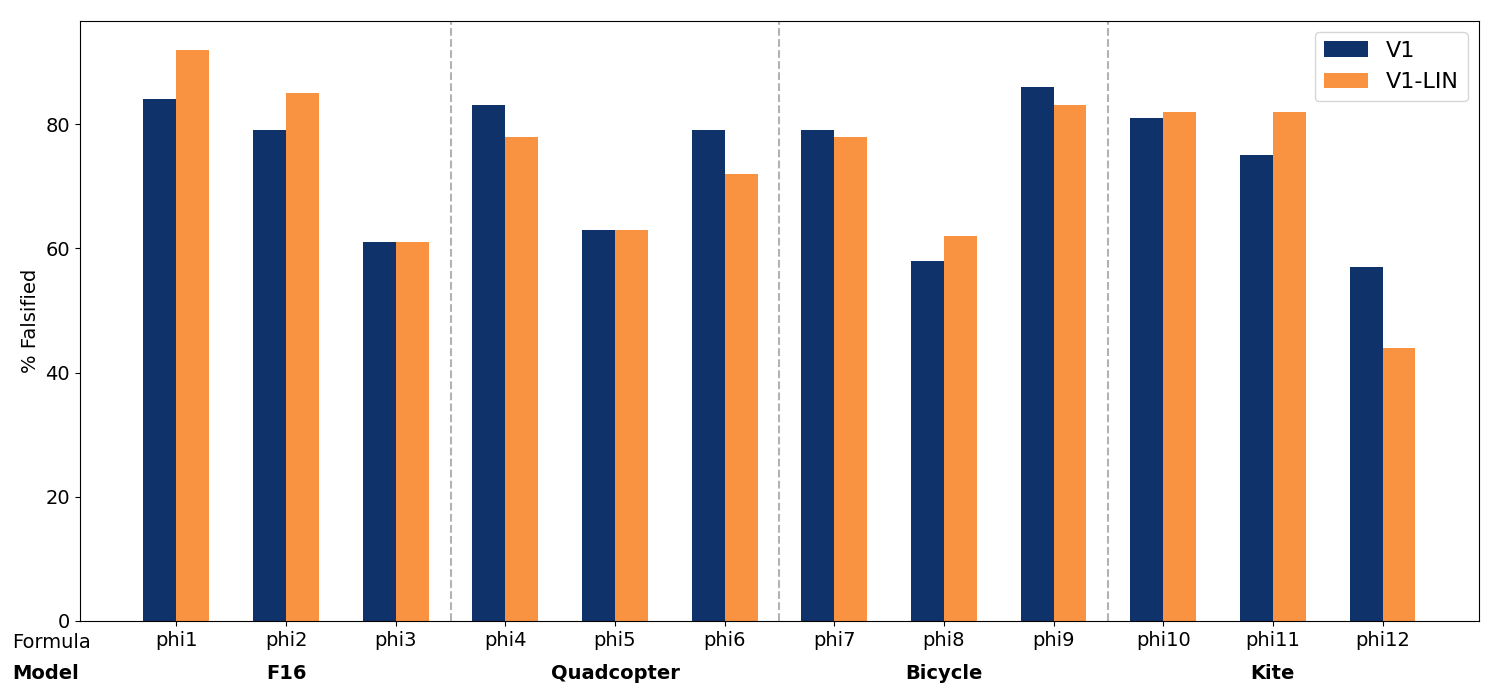}
    \caption{Falsification rate across models and formulas.}
    \label{fig:falsification}
\end{figure}

We minimized robustness using General Simulated Annealing within PSY-TaLiRo \cite{psytaliro}.
The minimization terminates when a negative robustness is found. 
We performed 100 trials of each system/formula/variant pairing, identifying 1) whether the trial produced a negative robustness, and 2) the iteration of Simulated Annealing at which there has been no change in the minimum robustness found so far in the past 300 iterations, suggesting a local minimum. 
We call this convergence speed. 
Falsification rates are shown in \autoref{fig:falsification}, and convergence speeds are shown in \autoref{fig:convergence}.

We conducted a two-sided Student's t-test for each model/formula pairing (tested data is shown in \autoref{fig:falsification}). The minimum p-value produced for a given model/formula pairing is 0.066 (the kite/phi12 pairing).
Thus, at the 0.05 significance threshold, there is no statistically meaningful evidence that \variantVI{} falsifies more or less often than \variantI.

\begin{figure}[t]
    \centering
    \includegraphics[width=\linewidth]{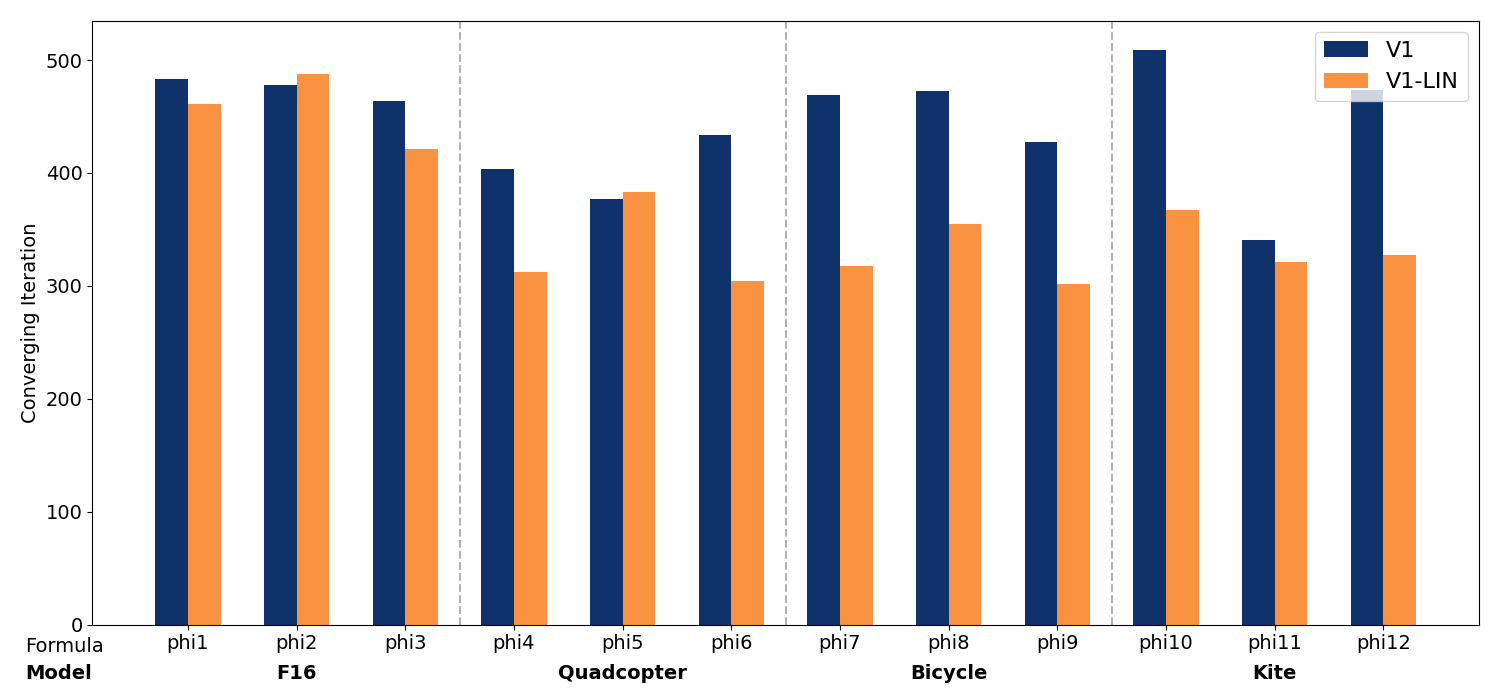}
    \caption{Convergence speed across models and formulas. The Y axis shows the iteration at which there has been no change in the minimum robustness found so far in the last 300 iterations.}
    \label{fig:convergence}
\end{figure}

For convergence testing we conducted a one-sided Student's t-test for each model/formula pairing (tested data is shown in \autoref{fig:convergence}), with a null hypothesis that \variantVI{} converges no more slowly than \variantI. The minimum p-value produced in these tests for a given model/formula pairing is 0.28 (the quadcopter/phi5 pairing).
There is no statistical evidence suggesting that \variantVI{} converges more slowly than \variantI. Rather, from our results, it seems as though \variantVI{} outperforms \variantI{} in converging to a minimum, though this wasn't an objective when designing the \linearsem{} semantics. 

Considering our results with falsification and convergence in the closed-loop setting and runtime/RAM capabilities in MPC, the \linearsem{} robustness \variantVI{} is a suitable replacement for \variantI{} in a closed-loop MPC setting.
Comparing the bounded and unbounded algorithms, the unbounded variants (\variantIV{} and \variantVII) significantly outperform the bounded ones (\variantI{} and \variantVI). This is expected, as the bounded variants address more complex formulas.

\section{Conclusion}\label{sec:conclusion}
We presented the first privacy-preserving robustness monitor for dense-time STL interpreted over timed state sequences. The monitor uses the MASCOT protocol as implemented in the MP-SPDZ MPC framework.
Working with the unbounded fragment of STL, and using a new, semi-linear, robust semantics were shown to yield significantly faster monitors.
Closed-loop testing appears practical for the \linearsem{} robustness measure, falsifying at least as quickly as the \maxminsem{} robustness.

\bibliographystyle{IEEEtran}
\bibliography{references}

\appendices
\section{Security Definitions and Guarantees}
\label{sec:security guarantees}

\newcommand{\ceq}{\overset{c}{=}}
\newcommand{\simulator}{\mathcal{S}}    
This section provides the formal definitions for protocol security that apply to our work. The material here is standard in the cryptography literature. 
We refer to \cite{evans2018pragmatic} and \cite{Cramer_Damgrd_Nielsen_2015} for textbook treatments of this material, and to \cite{canetti01UC} for a more in-depth treatment of the security model for MASCOT specifically.

We assume that there are two parties: \emph{Designer} and \emph{Tester}. 
An interactive MPC protocol, or simply a protocol, is a pair of instructions
$\pi = (\pi_{\mathsf{Des}}, \pi_{\mathsf{Mon}})$,
one for Designer and one for Tester, such that at each interaction round,
the Tester and the Designer send messages to each other as dictated by
the protocol. 
A protocol has a setup phase, in which the parties generate random data for later use in the protocol, among other things. We can think of the setup phase as round $0$.

We assume the two parties are honest-but-curious, aka semi-honest.
This means that they follow the steps of the protocol, but might store a transcript of the exchanged messages and their own random numbers, in order to perform extra calculations `on the side', to try and learn more than just $f(x,y)$. 

\paragraph{Designer’s private input.}
The Designer's input is a signal $\signal = (\sigvals, \sigtimes) \in [-C,C]^n \times [0,T]^n$ of length $n$, as per \autoref{def:signal}.
When doing calculations in the clear (without MPC) the signal values and timestamps are treated as type {\sc{double}}, say. 
For MPC, formally, the entries of $\sigvals$ and $\sigtimes$ are elements of a sufficiently large field, which allows to perform addition and multiplication with sufficient accuracy.

\paragraph{Tester’s private input.}
The Tester's input is an STL formula $\formula$ over a finite set of atomic propositions. 
This is represented as a list of all its sub-formulas.
Note that unlike in \cite{henzinger2025privacy} for instance, an STL specification has no automaton representation. 
Note also that the formula is an \textit{input} to the robustness functionality, whereas in \cite{henzinger2025privacy} it is fixed and compiled as part of the circuit.

\paragraph{Indistinguishability}
An \emph{ensemble} $X = \{X_\sigma\}_{\sigma \in \Ne}$ is a sequence of random variables $X_\sigma$, ranging over strings of bits of length polynomial in $\sigma$. For a given $\sigma$, the notation $x\sim X_\sigma$ will mean that $x$ is drawn according to the distribution of $X_\sigma$.
We will need the following notion to define security of a protocol. 
\begin{definition}
\label{def:comp indis}
    The ensembles $X$ and $Y$ are \emph{computationally indistinguishable} if for every probabilistic polynomial-time algorithm $D: \{0,1\}^* \rightarrow \{0,1\}$, every positive polynomial $p$, and all sufficiently large $\sigma$, it holds that 
    \[\left|Pr_{x \sim X_\sigma}[D(x)=1] - Pr_{y\sim Y_\sigma}[D(y) = 1]\right| < \frac{1}{p(\sigma)}\]
    If $X$ and $Y$ are computationally indistinguishable, we write $X \ceq Y$.
\end{definition}
Algorithm $D$ is known as the distinguisher.
Intuitively, $D$ tries to distinguish the two ensembles by computing a test ($D(x),D(y)$) that is sufficiently likely (probability more than $1/p(\sigma)$) to return different results on at least one random variable chosen from the two ensembles. 
The definition says that no such distinguisher exists for the two ensembles $X$ and $Y$: no matter what distinguisher is selected, there exist long enough random variables (via choice of $\sigma$) that it cannot distinguish with any likelihood. 

\paragraph{The ideal and real worlds.}
The definition of security refers to an ideal world, in which a trusted (or incorruptible) third party receives the inputs of Designer and Tester via secure channels, performs the computation $f(x,y)$, and reveals the result to the two parties. In this ideal setting, by definition, Designer and Tester can only learn whatever they can compute using their input and $f(x,y)$. This much is inevitable, which is why this is considered `ideal'.

In the real world, there is no trusted third party; rather, the two parties, Designer and Tester, execute the steps of the protocol $\pi$. What each of them gets to see is their own inputs, the messages they exchange with the other party, and any random numbers they draw. This constitutes their \emph{view}, $\mathsf{view}^{\pi}_{\mathsf{T}}(\formula, 1^\kappa)$ for Tester and $\mathsf{view}^{\pi}_{\mathsf{D}}(\signal, 1^\kappa)$ for Designer. 
Here $\kappa \in \Ne$ is the security parameter.
The protocol is considered secure (or private) if anything a party can compute based on their view in the real world is something they could compute anyway even in the ideal world. 
So we also define an ideal, or simulated, view for each party.
The ideal view for a party is whatever they can compute using a probabilistic polynomial-time algorithm (known as a \textit{simulator}) based on what they see in the ideal world, namely their own input and the final functionality output.
These ideal views are denoted $\simulator_{\mathsf{T}}(\formula, 1^\kappa, f(\formula,\signal))$ for Tester and  $\simulator_{\mathsf{D}}(\signal, 1^\kappa, f(\formula,\signal))$ for Designer.

Note that a view, whether real or ideal, is a random variable, so as we vary the inputs to the protocol, we obtain an ensemble.
To say that the real and ideal views are practically the same, we have to consider indistinguishability of the \textit{ensembles} of views. 
We thus arrive at the following definition.
\begin{definition}[2-party privacy w.r.t. semi-honest behavior and deterministic functionality]
    \label{def:security}
    A protocol $\pi$ privately computes a deterministic functionality $f$ iff there exists simulators $\simulator_{\mathsf{T}}$ and $\simulator_{\mathsf{D}}$ s.t. 
    \begin{eqnarray*}
        \{  \simulator_{\mathsf{T}}(\formula, 1^\kappa, f(\formula,\signal)) \}_{\formula, \signal, \kappa} &\ceq& \{\mathsf{view}^{\pi}_{\mathsf{T}}(\formula, 1^\kappa)\}_{\formula, \signal,\kappa}
        \\
        \{  \simulator_{\mathsf{D}}(\formula, 1^\kappa, f(\formula,\signal)) \}_{\formula,\signal, \kappa} &\ceq& \{\mathsf{view}^{\pi}_{\mathsf{D}}(\formula, 1^\kappa)\}_{\formula, \signal,\kappa}
    \end{eqnarray*}
\end{definition}
Note that the correctness of the protocol (the fact that it computes the correct value) is included in the definition of security: the trusted third party in the ideal world reveals the correct value $f(\formula,\signal)$, which is thus part of the ideal view. We pick a distinguisher $D$ (in \autoref{def:comp indis}) that simply returns the computed value from both ideal and real views. 
If the probability of error of the protocol cannot be made negligible (smaller than any inverse polynomial in $\kappa$) then this $D$ would be able to distinguish the two ensembles, which contradicts security.

\section{Proof of soundness of \linearsem{} robustness}\label{appendix:semanticsproof}
The following is a proof for \autoref{thm:linear sem props}.

\begin{proof}

    For clarity, let $v_j(i) \defeq \rob{\phi_j}(\signal, i)$ and $v_j'(i) \defeq \robapprox{\phi_j}(\signal, i)$ for $i \in \{0, 1, , \dots,n\}$ and $j \in \{1, 2\}$.

    \textbf{We start with the case $v \leq 0$}, and argue by induction on the structure of the formula. 
    Case $\phi = \top$ is excluded since $v\leq 0$ and case $\phi=\bot$ is true by definition.



    Case $\phi = p(c)$: We have that $\sigvals(i) - c \leq 0$. Thus $v' = \max(\sigvals(i) - c, 0) = 0$.

    Case $\phi = \lnot p(c)$: We have that $c - \sigvals(i) \leq 0$. Thus $v' = \max(c - \sigvals(i), 0) = 0$.

    Case $\phi = \phi_1 \lor \phi_2$: we have that $v_1(i) \leq 0$ and $v_2(i) \leq 0$. By the induction hypothesis, $v_1'(i) = v_2'(i) = 0$ and thus $v' = 0$.

    Case $\phi = \phi_1 \land \phi_2$: at least one of $v_1(i)$ and $v_2(i)$ is non-positive. Without loss of generality, say it is $v_1(i)$. Then by the induction hypothesis $v_1'(i) = 0$ and $v_2'(i) \geq 0$, and so $v' = 0$.

    Case $\phi = \phi_1 \until{I} \phi_2$: For all $j \in \sigtimes^{-1}(\sigtimes(i) + I)$, we have that $\min\lp v_2(j), \min_{i \leq k < j}\lp v_1(k)\rp\rp \leq 0$. Then by the same reasoning as the $\phi_1 \land \phi_2$ case, we have $\min\lp v_2'(j), \min_{i \leq k < j}\lp v_1'(k)\rp\rp = 0$. Summing these over all $j$ equates to 0, so $v' = 0$.

    Case $\phi = \phi_1 \release{I} \phi_2$: For at least one $j \in \sigtimes^{-1}(\sigtimes(i) + I)$, we have that $\max\lp v_2(j), \max_{i \leq k < j}\lp v_1(k)\rp\rp \leq 0$. Call the first instance of this occurrence $j^*$. Then $v_2'(j^*) = v_1'(i) = v_1'(i + 1) = \dots = v_1'(j^* - 1) = 0$. The summation of these is 0, so $v' = 0$.

    \textbf{Next we consider the case $v > 0$}. The base cases (logical constants and atoms) are immediate. 





    \underline{Case $\phi = \phi_1 \lor \phi_2$}: It holds that $\max(v_1(i),v_2(i))\geq 0$.
    Without loss of generality, suppose $v_1(i) \leq v_2(i)$, so a fortiori $0\leq v_1(i)$.
    By the induction hypothesis, $0 < v_1'(i) \leq v_1(i)$; if $v_2(i) \leq 0$ then $v_2'(i) = 0$ and if $v_2(i) > 0$ then $0 < v_2'(i) \leq v_2(i)$. Either way, $0 \leq v_2'(i) \leq v_1(i)$. 
    Since $v_1'(i) > 0$ and $v_2'(i) \geq 0$ the average of the two must be greater than 0, and since $v_1'(i) \leq v_1(i)$ and $v_2'(i) \leq v_1(i)$ then the average must also be less than $v_1(i)$. 
    Thus $0 < \frac{v_1'(i) + v_2'(i)}{2} \leq v_1(i)$. Therefore $0 < v'(i) \leq v(i)$.

    \underline{Case $\phi = \phi_1 \land \phi_2$}: Both $v_1(i) > 0$ and $v_2(i) > 0$, and so $0 < v_1'(i) \leq v_1(i)$ and $0 < v_2'(i) \leq v_2(i)$. The result follows from the induction hypothesis. 

    \underline{Case $\phi = \phi_1 \until{I} \phi_2$}: Let $a(j) \defeq \min(v_2(j), \displaystyle{\min_{i \leq k < j}}(v_1(k)))$ and let $a'(j) \defeq \min(v_2'(j), \displaystyle{\min_{i \leq k < j}}(v_1'(k)))$, for all $j \in \{1, \dots, n\}$. There is some $j^* \in \sigtimes^{-1}(\sigtimes(i) + I)$ where $\min(v_2(j^*), \displaystyle{\min_{i \leq k < j^*}}(v_1(k))) > 0$. 
    By similar logic to the previous case $\phi = \phi_1 \land \phi_2$, we have $0 < a'(j^*) \leq a(j^*)$.
    Then by similar reasoning to the case $\phi = \phi_1 \lor \phi_2$, we say the maximum of $a(j)$ is greater than or equal to the average of $a'(j)$, i.e., we have $\displaystyle{\max_{j \in \sigtimes^{-1}(\sigtimes(i) + I)}}(a(j)) \geq \frac{1}{|\sigtimes^{-1}(\sigtimes(i) + I)|} \sum_{j \in \sigtimes^{-1}(\sigtimes(i) + I)} a'(j)$. 
    Thus $v' \leq v$. 
    All other $a'(j)$ for $j \in \sigtimes^{-1}(\sigtimes(i) + I)$ are greater than or equal to 0, since every element in $\{v_2'(j), v_1'(i), v_1'(i+1),\dots,v_1'(j-1)\}$ is greater than or equal to 0 by the induction hypothesis. Therefore $v' > 0$, since every element of the average is at least 0 and $a'(j^*)$ is strictly positive. Summarizing, we have $0 < v' \leq v$.

    \underline{Case $\phi = \phi_1 \release{I} \phi_2$}: Let $a(j) \defeq \max(v_2(j), \displaystyle{\max_{i \leq k < j}}(v_1(k)))$ and let $a'(j) \defeq \frac{1}{j - i + 1}\lp v_2'(j) + \displaystyle{\sum_{i \leq k < j}} v_1'(k)\rp$, for all $j \in \{1, \dots, n\}$. 
    Since the maximum of values upper bounds their average, $a'(j) \leq a(j)$ for all $j$. 
    This holds for all $j$, so $\displaystyle{\min_{j \in \sigtimes^{-1}(\sigtimes(i) + I)}}(a(j)) \geq \displaystyle{\min_{j \in \sigtimes^{-1}(\sigtimes(i) + I)}}(a'(j))$ and so $v' \leq v$. 
    But we still need to show that $v' > 0$ -- to see this, consider that for all $j \in \sigtimes^{-1}(\sigtimes(i) + I)$, $a(j) > 0$. therefore at least one element in $\{v_2(j), v_1(i), v_1(i+1), \dots, v_1(j-1)\}$ is positive. Let $v_k(l)$ be the maximum of these elements; then $v_k(l) > 0$ and $v_k(l) = a(j)$. By the induction hypothesis we have that $0 < v_k'(l) \leq v_k(l)$. Thus there is at least one element in $\{v_2'(j), v_1'(i), v_1'(i+1), \dots, v_1'(j-1)\}$ that is positive, and so $a'(j)$ is also positive. The minimum of positive elements is positive, therefore $v' > 0$. Summarizing, we have $0 < v' \leq v$.
\end{proof}

\section{The \bounds{} subroutine}
\label{appendix:bounds}
Here we provide the pseudocode for the \bounds{} subroutine.
\begin{algorithm}
    \KwIn{a non-negative interval $I = [a, b]$, current index $i$, sampled times of the signal $\sigtimes := [t_1, \dots, t_n]$, $n$ the length of the signal, interval index lower bound $k_a$ from the previous iteration, interval index upper bound $k_b$ from the previous iteration}
    \KwOut{Return minimum and maximum indices of $\sigtimes$ values contained in the interval $t_i + I$}
    \BlankLine

    \Proc{BOUNDS($I, i, \sigtimes, n, k_a, k_b$)}{
        \lIf{$b = T$}{$k_b := n$}
        \Else{
            \leIf{$k_b = -\infty$}{$h := n$}{$h := k_b$}
            \For{$j := h \KwTo i$}{
                \If{$t_j < t_i + b$}{
                    $k_b' := j$\;
                    $break$\;
                }
            }
        }
        \leIf{$k_a = \infty$}{$h := n$}{$h := k_a$}
        \For{$j := h \KwTo i$}{
            \lIf{$t_j \geq t_i + a$}{$k_a' := j$}
            \lElse{$break$}
        }
        \Return{$k_a', k_b'$}
    }
    \caption{BOUNDS}
    \label{algo:bounds}
\end{algorithm}

\end{document}